\documentclass[conference]{IEEEtran}
%% INFOCOM 2014 addition:
\makeatletter
\def\ps@headings{%
\def\@oddhead{\mbox{}\scriptsize\rightmark \hfil \thepage}%
\def\@evenhead{\scriptsize\thepage \hfil \leftmark\mbox{}}%
\def\@oddfoot{}%
\def\@evenfoot{}}
\makeatother
\pagestyle{headings}

\ifCLASSINFOpdf
  \usepackage[pdftex]{graphicx}

\else
 \usepackage[dvips]{graphicx}

\fi

% *** MATH PACKAGES ***
%\usepackage[cmex10]{amsmath}
\usepackage{cite}
\usepackage{times}
\usepackage{dsfont}
\usepackage{cite}
\usepackage{times}

\usepackage{graphicx}
\usepackage{subfigure}
\usepackage{color}
\usepackage{ifpdf}
\usepackage{epsfig}
\usepackage{latexsym}
\usepackage{amsfonts}
\usepackage{amsmath}
\usepackage{amssymb}
\usepackage{paralist}
\usepackage{comment}
\usepackage{xspace}
\usepackage{mathrsfs}
\usepackage{amssymb}
\usepackage{slashbox}
\usepackage{color}
\usepackage[small]{caption2}
\usepackage[ruled]{algorithm}
\usepackage[noend]{algorithmic}

\usepackage{amssymb}
\usepackage{url,epsfig,array}
\usepackage{leftidx}
\usepackage{amsmath}
\usepackage{multirow}
\usepackage{url}

\hyphenation{op-tical net-works semi-conduc-tor}

\def\ie{\textit{i.e.}\xspace}
\def\etal{\textit{et al.}\xspace}

\def\eg{\textit{e.g.}\xspace}

\def\aka{\textit{a.k.a.}\xspace}

\newtheorem{theorem}{Theorem}

\newtheorem{lemma}[theorem]{Lemma}
%\newtheorem{theorem}[theorem]{Theorem}

% User Defined Comand

\setlength{\abovedisplayskip}{1pt}
\setlength{\belowdisplayskip}{1pt}

\newcommand{\LXYCUT}[1]{{}}

\begin{document}
\bibliographystyle{plain}

%
% paper title
% can use linebreaks \\ within to get better formatting as desired
\title{PPS: Privacy-Preserving Strategyproof Social-Efficient Spectrum Auction Mechanisms}
\author{\IEEEauthorblockN{He Huang\IEEEauthorrefmark{1}, Xiang-Yang Li\IEEEauthorrefmark{3}\IEEEauthorrefmark{4}, Yu-e Sun\IEEEauthorrefmark{2}, Hongli Xu\IEEEauthorrefmark{5}, and Liusheng Huang\IEEEauthorrefmark{5}}
\IEEEauthorblockA{\IEEEauthorrefmark{1}School of Computer Science and Technology, Soochow University, Suzhou, China\\
\IEEEauthorrefmark{2}School of Urban Rail Transportation, Soochow University, Suzhou, China\\
\IEEEauthorrefmark{3}Department of Computer Science, Illinois Institute of Technology, Chicago, USA\\
\IEEEauthorrefmark{4}Tsinghua National Laboratory for Information Science and Technology (TNLIST), Tsinghua University, China\\
\IEEEauthorrefmark{5}Department of Computer Science and Technology,
University of Science and Technology of China, Hefei, China
%\\Emails: \{huangh, sunye12\}@suda.edu.cn, xli@cs.iit.edu,
% \{xuhongli,lshuang\}@ustc.edu.cn
}}
%\markboth{XXX}%
%{XXX \MakeLowercase{\textit{et al.}}: XXX}

\maketitle

\begin{abstract}
%Auction has been regarded as a promising approach for mitigating
%the spectrum scarcity in wireless networks.
Many spectrum auction mechanisms have been proposed for spectrum
allocation problem, and unfortunately,
few of them protect the bid privacy of bidders and achieve good social
efficiency.
In this paper, we  propose PPS,  a \underline{P}rivacy \underline{P}reserving
\underline{S}trategyproof spectrum auction framework.
Then, we design two schemes based on PPS separately for
1) the Single-Unit Auction model (SUA), where only
single channel to be sold in the spectrum market;
and 2) the Multi-Unit Auction model (MUA),
where the primary user subleases multi-unit
channels to the secondary users and each of the secondary users  wants
to access multi-unit channels either. Since the social
efficiency maximization problem is NP-hard in both auction
models, we present  allocation mechanisms
with approximation factors of $(1+\epsilon)$ and $32$ separately
for SUA and MUA, and further judiciously design strategyproof auction
mechanisms with privacy preserving based on them.
Our extensive evaluations show that our mechanisms
 achieve good social efficiency and with low
computation and communication overhead.
\end{abstract}

% Note that keywords are not normally used for peerreview papers.
%\begin{IEEEkeywords}
%Spectrum auction, Approximation algorithm, Privacy preserving, Social efficiency, Strategyproof.
%\end{IEEEkeywords}

%\IEEEpeerreviewmaketitle

\section{Introduction}\label{sec:intro}
%With the fast development of wireless communication (\emph{such as $3$/$4$G}) technology,
%some intelligent mobile devices are rapidly becoming the crucial central computing devices in our daily lives \cite{lane2010survey}.
The ever-increasing demand for limited radio spectrum resource poses a great
challenge in spectrum allocation and usage \cite{xu2011efficient}.
%However, many studies have shown that there are large number of under-utilized holes (\emph{a.k.a. white spaces}) in the current networks.
Recent years, auction has been widely regarded as a preeminent way to tackle such a challenge
 because of its fairness and efficiency \cite{krishna2009auction}. In general, bidders in spectrum auctions are
 the secondary users, while the auctioneer is a primary user in the
 single-sided spectrum auctions.
% and is a third-party in the double spectrum auctions, such as \emph{SpecEx} \cite{specex}.

%Strategyproofness (\emph{a.k.a Truthfulness}), which indicates bidding the true valuation is the dominant strategy of bidders, is
%one of the most important aims in the spectrum auction design.
In recent years, many strategyproof auction mechanisms,
 in which  bidding the true valuation is the dominant strategy of
 bidders, have been proposed for solving spectrum allocation issue.
Unfortunately, the auctioneer is not always trustworthy.
%especially in the case that the primary user is the auctioneer.
Once the true valuations of bidders are revealed to a corrupt auctioneer,
he may abuse such information
to improve his own advantage.
%For instance,
%the auctioneer may increase the reservation price of the
%channels to be sold in the auction or in the future auctions to
%maximize his utility.
Besides, the true valuation may divulge the profit of bidders,
which is also a commercial secret for each bidder.
Therefore, bid privacy preservation should be considered in
spectrum auction design.
However, only few studies (\eg,
~\cite{wufan2013spring,pan2011purging}) were proposed
to protect the bid privacy of bidders.

Allocating channels to the buyers who \textbf{value} them most
 will improve the \emph{social efficiency}.
There have been many studies devoted to maximizing the social efficiency
while ensuring strategyproofness in spectrum auction mechanism design ~\cite{gopinathan2010strategyproof,dong2012combinatorial,huang2013mobihoc,zhu2012truthful,xu2011efficient}.
Unfortunately, none of these auction mechanisms provides any guarantee on bid
privacy preservation.

In this paper, we consider the issue of designing strategyproof spectrum auction
mechanism which maximizes the social efficiency while protecting the bid
privacy of bidders.
We propose a \underline{P}rivacy \underline{P}reserving \underline{S}trategyproof spectrum auction framework (PPS).
Under PPS, we mainly study two models: 1) the \underline{S}ingle-\underline{U}nit \underline{A}uction model (SUA) and
2) the \underline{M}ulti-\underline{U}nit \underline{A}uction model (MUA).
In the SUA model, the auction mechanism design only focuses on single channel trading.
Multi-unit channels trading is supported in the case of MUA model.
%There is a single channel in the SUA model, and
%both the channels to be sold and each bidder wants to buy are multi-unit in the MUA model.
Since the maximization of social efficiency problem in both SUA and MUA are NP-hard,
we design  allocation mechanisms with approximation factors of ($1+\epsilon$) and
$32$ separately for the SUA and the MUA.
We show that the proposed approximation allocation
mechanisms are bid-monotone, and further design strategyproof auction mechanisms based on them,
which are denoted as PPS-SUA and PPS-MUA respectively.
As the PPS-MUA only ensures the worst case performance,
we further propose an improved mechanism, denoted by PPS-EMUA, to
improve the social efficiency  of PPS-MUA.
We also show that PPS-EMUA is strategyproof and privacy-preserving.

It is not a trivial job to protect privacy  of the true bid
values of bidders in the auction mechanisms as auction relies on these
bid values to make  decision on allocation and payment computation.
Notice that, for maximizing social efficiency and computing payment,
 we need to compute many various
 bid sums of conflict-free bidders in our allocation mechanisms.
However, it is hard for the auctioneer or
 the bidders to compute these bid sums with privacy preserving since
  the auctioneer does not know any bidder's true bid value.
%Note that  the location
%of bidders could be sensitive information in spectrum auction, which
%reveals the cooperation and competitive relationships among bidders.
To address these challenges, we will first introduce an \emph{agent},
which is a semi-trusted third party (such as FCC), different from auctioneer.
The agent, together with
the auctioneer, will execute the auction in PPS.
In our design, bidders apply \emph{Paillier's homomorphic
 encryption} to encrypt the bids so agent can perform computation on
 the ciphertexts,
 agent then sends the results by adding random numbers and shuffling
 bidder IDs to auctioneer for making  allocation decision,
 which provides privacy protection without affecting the
 correctness of the allocation.
We will  prove that neither the agent nor the auctioneer can
 infer any true bid value about the bidders without collusion.
To the best of our knowledge, PPS is the first privacy preserving
 spectrum auction scheme that maximizes the social efficiency.
 Note that we did not focus on protecting the location privacy of bidders in our
  mechanisms, as previous schemes (\eg, \cite{taeho-li-info13}) can be integrated into our mechanisms.

The remainder of paper is organized as follows.
In Section \ref{sec:prelim},
we  formulate the spectrum auction  and present the framework of PPS.
Section \ref{sec:homogeneous} proposes a strategyproof spectrum auction
mechanism for solving the single-unit auction model.
Section \ref{sec:heterogeneous} further extends the auction model with consideration of multiple-items trading model.
Extensive simulation results are evaluated in Section \ref{sec:simulation}.
Section \ref{sec:review} discusses the related literatures and section \ref{sec:conclusion} concludes the paper.

\section{Problem Formulation and Preliminaries}\label{sec:prelim}
%In this section,
%we first formulate our spectrum auction model, and state the design targets of our work.
%Then, we overview the homomorphic encryption system used in this paper in Sec. \ref{sec:homo}.
%At last, we will introduce the PPS, which is a privacy preserving spectrum auction framework.

\subsection{\textbf{Spectrum Auction Model}}\label{sec:model}

We model the procedure of secure spectrum allocation as a sealed-bid auction, in which there is
an \emph{auctioneer} (\aka primary user), a set of \emph{bidders} (\aka secondary users) and an \emph{agent}.
In each round of the auction, the auctioneer subleases the access right of $m$ channels
to $n$ bidders. The bidders first encrypt their bids by using the
\emph{encryption key} of a homomorphic encryption scheme (\eg,
Paillier's scheme) for
the auctioneer, and submit the encrypted bids to the \emph{agent} (not
the auctioneer).
Here, $E(m)$ denotes the homomorphic encryption of message $m$.
Then, the auctioneer and the agent allocate the channels to the bidders via communicating
with each other. We assume that the agent is a \emph{semi-trusted
  party}, and will not collude with the auctioneer.

We use $\mathcal{C}=\{c_1,...,c_m\}$ to denote the set of channels, and $\mathcal{B}=\{1,...,n\}$
to denote the set of bidders. Each bidder $i\in \mathcal{B}$ is described as $i=\{L_i, N_i, b_i, v_i, p_i\}$,
where $L_i$ is the geographical location of $i$,
$N_i$ is the number of channels that bidder $i$ wants to buy,
$b_i$, $v_i$ and $p_i$ separately denote the bid value, true valuation and payment of $i$ for all the
channels that he wants to buy.
We assume that the interference radii of all channels are the same,
which are equal to $\frac{1}{2}$ unit. Then,  two bidders $i$ and $j$ conflict with each other if the distance
between $L_i$ and $L_j$ is smaller $1$ unit. Bidders can share one
channel iff they are conflict free with each other.

In this paper, we study two spectrum auction models. The first one is that
there is only one channel in the spectrum
market, then $m=1$ and $N_i=1$ for each bidder. We call this model the \emph{Single-Unit Auction model} (SUA).
The second one is  the \emph{Multi-Unit Auction model} (MUA) which supports multiple channels trading in the market.
In MUA, each bidder wants to access $N_i \ge 1$ channels rather than
part of them.

\subsection{\textbf{Design Targets}}\label{sec:target}

Our work is to design social efficient strategyproof spectrum
 auction mechanisms with bid privacy preservation.
Firstly, we will allocate channels to the bidders who value them most to
 maximize the social efficiency.
 However, the optimal channel allocation problem  in SUA and MUA
 are all NP-hard. Thus, we will design approximation
 mechanisms  instead.
Secondly, our auction mechanisms should be strategyproof, which means
 bidding truthfully
 is the \emph{dominant strategy} for any bidders.
To achieve this,
 it is sufficient  to show that our allocation mechanism is
 \emph{bid-monotone}, and always charges each winner its
 \emph{critical value} \cite{nisan2007algorithmic}.
We say an allocation mechanism is bid-monotone if
 bidder $i$ wins the auction by bidding $b_i$, he will always
 win by bidding $b'_{i}>b_i$.
And the critical value of each bidder $i$ in a bid-monotone allocation
 mechanism is  the minimum bid that bidder $i$ will win in the auction.
The third objective  is to protect the privacy of the bid values of
bidders.
To achieve privacy protection, we will apply homomorphic encryption to
encrypt the bid values using the public key of auctioneer, and agent
will perform the most of the
computation and send the intermediate results to the auctioneer.
We will show that
both the auctioneer and the agent cannot get any  information about
the true bid values of bidders as long as they will not collude with each other.

\LXYCUT{%%%%%%%%%%%%%%%
\subsection{\textbf{Homomorphic Encryption}}\label{sec:homo}

We use Paillier's homomorphic encryption system in this paper,
which satisfies the following homomorphic
operation:
\begin{equation*}
  E(\textsf{msg}_1)E(\textsf{msg}_2)=E(\textsf{msg}_1+\textsf{msg}_2)
\end{equation*}
\begin{equation*}
  E(\textsf{msg}_1)^{\textsf{msg}_2}=E(\textsf{msg}_1\textsf{msg}_2)
\end{equation*}
where $E(\textsf{msg}_i)$ is the ciphertext of message $\textsf{msg}_i$.
}%%%%%%%%%%%%%%%%%%%%

\subsection{\textbf{A Spectrum Auction Framework with Privacy Preserving}}\label{sec:framework}

The process of our spectrum auction mechanisms consists of three steps: bidding, allocation
and payment calculation.
\begin{comment}
First, each bidder $i$ submits his demand, which includes $b_i$, $N_i$ and $L_i$,
to the auctioneer. Then, the auctioneer chooses a subset of conflict-free bidders as winners to maximize the
social efficiency. And finally the auctioneer computes a payment for each winner, which can ensure the truthfulness
of bidders.
\end{comment}
To protect the bid values of bidders, we design a strategyproof spectrum auction framework with
privacy preserving, namely PPS, which is shown in Algorithm \ref{alg:0}.

%%\vspace{-0.4cm}
\begin{algorithm}
\caption{PPS: Privacy Preserving Strategyproof Spectrum Auction
  Framework}\label{alg:0}
{%\small
\begin{algorithmic}[1]
\STATE Each bidder $i$ submits $E(b_i)$, $N_i$ and $L_i$ to the agent, where $b_i$ is
encrypted by using the encryption key of the auctioneer;
\STATE The agent and the auctioneer run a bid-monotone allocation mechanism while protecting the
bid privacy of bidders.
\STATE The agent and the auctioneer compute a critical value for each winner with bid privacy
preserving.
\end{algorithmic}}
\end{algorithm}

\section{A Single-unit Scheme}\label{sec:homogeneous}
In this section, we will present a strategyproof spectrum auction mechanism for SUA,
denoted by PPS-SUA, which maximizes the social efficiency
and preserves the bid privacy.

\subsection{\textbf{Initialization and Bidding}}\label{sec:suaini}

Before running the auction, the auctioneer generates an encryption key $EK$
and a decryption key $DK$ of Paillier's cryptosystem. Then, he announces $EK$ as the public key, and keeps $DK$ in private.
Each bidder $i$ encrypts his bid $b_i$ by using $EK$, and sends $(E(b_i), L_i)$ to the agent.
In the sending procedure, each bidder keeps his encrypted bidding price as a secret to the auctioneer.

\subsection{\textbf{Allocation Mechanism with Privacy Preserving}}

After receiving the encrypted bids from bidders, the auctioneer and
the agent allocate channels to bidders via communicating with
each other. The goal of our allocation mechanism is to maximize
the social efficiency, which is equal to finding a group of
conflict-free bidders with highest bid sum, which is a well-known
NP-hard problem.
%Define the bid value of each bidder as his weight.
%Our optimal
%allocation problem can be easily reduced to the \emph{maximum weighted independent set} (MWIS) problem,
%which is a well-known NP-hard problem.
To tackle this NP-hardness, we propose a
polynomial time approximation scheme (PTAS) based on \emph{shifting strategy} \cite{hunt1998nc,li2006simple},
which provides an approximation factor of ($1+\epsilon$).
For completeness of presentation, we first review this PTAS method.

\begin{figure}[!t]
\centering
\includegraphics[width=2.5in,height=2in]{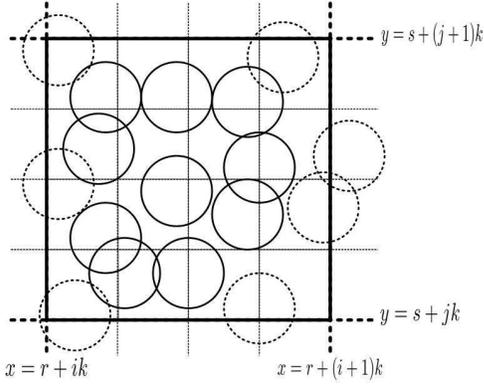}
\caption{A grid subdivided by (r,s)-shifting ($k=4$).}
\label{fig:shifting}
\end{figure}

In the PTAS, we first select a positive integer $k$, then, the plane is subdivided into several grids
of size at $k*k$ by a collection of vertical lines $x=i\cdot k +r$ and horizontal lines $y=j\cdot k+s$,
where $0\le r,s \le k-1$. We call such a subdivision as \emph{($r,s$)-shifting}.
Here we assume that the conflict radius of each bidder
is $\frac{1}{2}$,
then each bidder can be viewed as a \emph{unit disk}.
Fig. \ref{fig:shifting} gives an instance of a grid subdivided by $(r,s)$-shifting, where $k=4$.
We will throw away all the disks which intersect with some
special lines $X \equiv r \mod k$ and $Y \equiv s \mod k$ in ($r,s$)-shifting, and solve the sub-instances of disks
contained in each grid individually. Here, a grid is a square defined by $\{(x, y)\mid r+ik\le x\le r+(i+1)k, s+jk\le y \le s+(j+1)k\}$ for
some integers $i$ and $j$. Let the optimal solution of $(r,s)$-shifting be the union sets of all the optimal
solution of the subdivided grids, and $w(OPT(r,s))$ be the weight of the optimal solution of $(r,s)$-shifting.
It can be
proven that there is at least one $(r, s)$-shifting, $0 \le r, s \le
k-1$, with
\begin{equation}
w(OPT(r,s))\ge (1-\frac{1}{k})^2w(OPT(\mathcal{B}))
\end{equation}
where $OPT(\mathcal{B})$ is the maximum weighted independent set of all the bidders, and $w(OPT(\mathcal{B}))$ is the weight of
$OPT(\mathcal{B})$. For any given integer $k\ge 1$, there are $k^2$ kinds of different shiftings in total. We will choose the
optimal solution of $(r,s)$-shifting's that with the highest weight
as our final approximation solution. Thus, we have a PTAS
for optimal channel allocation problem,  \ie setting
$k=\frac{1+\epsilon+\sqrt{1+\epsilon}}{\epsilon}$.
% then $(1-\frac{1}{k})^2=1+\epsilon$.

\begin{comment}
The details of our PTAS for channel allocation problem are listed in Algorithm \ref{alg:1}.

%%\vspace{-0.4cm}
\begin{algorithm}
\caption{A PTAS for channel allocation problem in SUA}\label{alg:1}
{\small
\begin{algorithmic}[1]
\FOR {$r=1$ to $k$}
\FOR {$s=1$ to $k$}
\STATE Compute the optimal solution for each grid in ($r,s$)-shifting;
\STATE Compute the optimal solution of ($r,s$)-shifting;
\ENDFOR
\ENDFOR
\STATE Choose the solution with the highest weight in $\{OPT(r,s)\}_{1\le r,s \le k}$ as the final allocation.
\end{algorithmic}
}
\end{algorithm}
\end{comment}

Based on this PTAS we then present our channel allocation mechanism
 with privacy preserving.
Observe that the bidders  submit their bids to agent encrypted
 using the auctioneer's public key.
Following the PTAS protocol, we need to compute a maximum
 weighted independent set for each grid in the $(r,s)$-shifting, \ie,
 compare the weights of all independent sets.
Clearly, the auctioneer should not access the encrypted bid of any
 bidder as he has the decryption key.
In our protocol, the agent will compute $E(\sum_{i \in S} b_i)$ for
 each of the maximal independent set contained in a grid,
 which can be done easily as $E(b_i)$ is computed from homomorphic encryption.
For any given grid $g_j^{r,s}$ of the $(r,s)$-shifting, let $\mathcal{D}=\{d_{1,j}^{r,s},\cdots,d_{z,j}^{r,s}\}$ be the
set of \emph{maximal independent sets} of bidders in $g_j^{r,s}$. We use $OPT(g_j^{r,s})$ to denote the optimal solution
in the grid $g_j^{r,s}$.
Clearly $\mathcal{D}$ has cardinality of at most $O(k^2)$ and can be
enumerated in time $O(n^{O(k^2)})$.
%Then we can get that $OPT(g_j^{r,s})$ is the maximum independent set
%with the highest weight in $\mathcal{D}$.
In Algorithm~\ref{alg:2}, we present our method for finding the
 $OPT(g_j^{r,s})$ for each subdivided grid $g_j^{r,s}$ with
 privacy preserving.
To hide the true values of $w(d_{i,j}^{r,s})$ (which may break
 privacy) from the auctioneer, the agent will mask them by using two
 random values $\delta_1$ and $\delta_2$ as $\delta_1+ \delta_2 \cdot
 w(d_{i,j}^{r,s})$.
Note that the range $[1,2^{\gamma_1}]$ and $[1,2^{\gamma_2}]$ for $\delta_1$ and
 $\delta_2$ are chosen based on
 the consideration of the correctness of modular operations:
 $\delta_1 + \delta_2 \cdot w(d_{i,j}^{r,s})$ should be smaller than
 the modulo used in Paillier's system.
%In this paper, the reason $\delta$ is chosen from $\mathbb{Z}_{2^{\gamma_1}}$ or $\mathbb{Z}_{2^{\gamma_2}}$
%is based on the same consideration.

Assume that the number of grids that subdivided by $(r,s)$-shifting is $N_{r,s}$, then the optimal solution of
$(r,s)$-shifting is $OPT(r,s)=\bigcup \nolimits_{j\le
  N_{r,s}}{d_{\sigma(1),j}^{r,s}}$. By sending the intermediate results
to the auctioneer, the auctioneer can compare and find which
independent set will be chosen for each subgrid.
Observe that both the auctioneer and the agent will not know the bid
 values in the independent set.
By using the optimal
solution of each grid, the agent can calculate the encrypted value $E(w(OPT(r,s)))$, and allocate channels to bidders without
leaking the true bid values of bidders.
The allocation will be sent to the auctioneer.
The details are described in Algorithm \ref{alg:3}.

\begin{comment}
\begin{theorem}
Our approximation allocation mechanism of PPS-SUA can be solved in polynomial time.
\end{theorem}

\begin{proof}
For each $k*k$ grid, there are no more than $k^2$ unit disks can be selected in a maximum independent set.
Thus, there are at most $(C_n^{k^2}+...+C_n^1)$ maximum independent set to be check in one grid. Besides,
there are no more than $n$ grids to be considered in an arbitrary ($r,s$)-shifting, and there are only $k^2$
kinds of different shiftings. Then, the run time of our approximation allocation mechanism for SUA is
at most $O(k^2n^{k^2+1})$ in the worst case.
\end{proof}
\end{comment}

%%\vspace{-0.4cm}
\begin{algorithm}
\caption{Computing the optimal solution for grid $g_j^{r,s}$}\label{alg:2}
{%\small
\begin{algorithmic}[1]

\STATE The agent randomly picks two integers $\delta_1 \in \mathbb{Z}_{2^{\gamma_{1}}}$, $\delta_2 \in \mathbb{Z}_{2^{\gamma_{2}}}$,
computes and sends $\{E(\delta_1+\delta_2w(d_{i,j}^{r,s}))\}_{1\le i\le z}$ to the auctioneer, where
\begin{equation*}
E(\delta_1+\delta_2w(d_{i,j}^{r,s}))=E(\delta_1)(\prod\nolimits_{l\in d_{i,j}^{r,s}}{E(b_l)})^{\delta_2}
\end{equation*}

\STATE The auctioneer decrypts
$\{E(\delta_1+\delta_2w(d_{i,j}^{r,s}))\}_{0\le i\le z}$, and sorts
them in non-increasing order. Assume
\begin{equation*}
w(d_{\sigma(1),j}^{r,s})\ge w(d_{\sigma(2),j}^{r,s}) \ge... \ge w(d_{\sigma(z),j}^{r,s})
\end{equation*}
where $d_{\sigma(i),j}^{r,s}$ is the maximum independent set with rank $i$ in the sorted  list.

\STATE The auctioneer sends $\{\sigma(i)\}_{1\le i \le z}$ to the agent.
\STATE The agent chooses $d_{\sigma(1),j}^{r,s}$ as the optimal solution of grid $g_j^{r,s}$.
\end{algorithmic}
}
\end{algorithm}

%%\vspace{-0.4cm}
\begin{algorithm}
\caption{PTAS with bid privacy preserving}\label{alg:3}
{%\small
\begin{algorithmic}[1]

\STATE The agent randomly picks two integers $\delta_3 \in \mathbb{Z}_{2^{\gamma_{1}}}$, $\delta_4 \in \mathbb{Z}_{2^{\gamma_{2}}}$,
computes and sends $E(\delta_3+ \delta_4 w(OPT(r,s)))$ for any $1\le r,s \le k$ to the auctioneer, where
\begin{equation*}
E(\delta_3+\delta_4 w(OPT(r,s)))=E(\delta_3)(\prod\limits_{j\le N_{r,s}}{E(w(d_{\sigma(1),j}^{r,s}))})^{\delta_4}
\end{equation*}

\STATE The auctioneer decrypts and sorts the weights of the optimal solution of different shiftings in non-increasing order.
\begin{equation*}
w(OPT(\sigma_1(1),\sigma_2(1)))\ge ... \ge w(OPT(\sigma_1(k^2),\sigma_2(k^2)))
\end{equation*}
where $OPT(\sigma_1(i),\sigma_2(i))$ is the optimal solution of ($\sigma_1(i),\sigma_2(i)$)-shifting with rank $i$ in the sorted  list.

\STATE The auctioneer sends $\{(\sigma_1(i),\sigma_2(i))\}_{1\le i \le k^2}$ to the agent.
\STATE The agent chooses $OPT(\sigma_1(1),\sigma_2(1)))$ as the final solution, and sends the allocation result to the auctioneer.
\end{algorithmic}
}
\end{algorithm}

\begin{lemma}\label{monotone}
 Our  allocation mechanism for SUA is bid-monotone.
\end{lemma}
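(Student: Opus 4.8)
The plan is to reason about the idealized selection rule underlying Algorithms~\ref{alg:2} and \ref{alg:3}, since the affine masks $\delta_1+\delta_2 w(\cdot)$ and $\delta_3+\delta_4 w(\cdot)$ (with $\delta_2,\delta_4\ge 1>0$ and a common offset inside each sort) are strictly order-preserving; hence the auctioneer's sorted rankings, and therefore the maximal independent set selected in every grid and the shifting selected overall, coincide with those of the plain PTAS applied to the true weights $w(d_{i,j}^{r,s})=\sum_{l\in d_{i,j}^{r,s}}b_l$. The crucial structural observation is that the locations $L_i$ are fixed inputs, so the $(r,s)$-shiftings, the grids $g_j^{r,s}$, the set of discarded disks, and the family of maximal independent sets inside each grid do not depend on the bids at all; only the weights do. So I only need to track how raising one winner's bid moves these weights.

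Suppose bidder $i$ wins under a profile $b$. By Algorithm~\ref{alg:3} the final allocation is $OPT(r^*,s^*)$ for the selected shifting, so $i$ is not discarded in $(r^*,s^*)$ and lies in $OPT(g^*)$, where $g^*$ is the grid of $(r^*,s^*)$ that contains $i$. Now replace $b_i$ by $b'_i>b_i$ and write $\Delta=b'_i-b_i>0$. First I would establish \emph{within-grid monotonicity}: in any grid, the weight of every maximal independent set containing $i$ increases by exactly $\Delta$, while those avoiding $i$ are unchanged (and the family of such sets itself is fixed). In $g^*$ the old optimum $W$ is attained by a set containing $i$, so the best weight over sets containing $i$ equals $W$ before and $W+\Delta$ after, whereas the best weight over sets avoiding $i$ is at most $W$ and is unchanged; hence the optimum of $g^*$ becomes exactly $W+\Delta$ and is still attained by a set containing $i$. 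Consequently $i$ stays in $OPT(g^*)$ and $w(OPT(r^*,s^*))$ rises by exactly $\Delta$.

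Next I would bound the remaining shiftings. A bidder lies in exactly one grid of a given shifting, so for any $(r,s)\ne(r^*,s^*)$ only the grid containing $i$ can change, and the same argument shows its optimum rises by at most $\Delta$; thus $w(OPT(r,s))$ increases by some $\Delta_{r,s}$ with $0\le\Delta_{r,s}\le\Delta$, with $\Delta_{r,s}=0$ whenever $i$ is discarded in $(r,s)$. Since $(r^*,s^*)$ maximized the weight under $b$, we had $w(OPT(r^*,s^*))\ge w(OPT(r,s))$, so after the raise $w(OPT(r^*,s^*))+\Delta\ge w(OPT(r,s))+\Delta\ge w(OPT(r,s))+\Delta_{r,s}$, i.e. $(r^*,s^*)$ still attains the maximum weight and $i$ remains in a maximum-weight shifting.

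The one point needing care — and the step I expect to be the main obstacle — is tie-breaking at the shifting-selection level: I must rule out that the raise lets some other shifting tie $(r^*,s^*)$ and get chosen in its place, excluding $i$. The chain of inequalities above forces any post-raise tie to satisfy $\Delta_{r,s}=\Delta$ together with a pre-existing tie. But $\Delta_{r,s}=\Delta>0$ can occur only when some set containing $i$ already attained that grid's old optimum, in which case the new grid optimum is attained by a set containing $i$, so $i\in OPT(r,s)$. Hence every shifting that ties $(r^*,s^*)$ after the raise also contains $i$, and whichever maximizer the mechanism picks, $i$ is a winner; bid-monotonicity follows independently of the tie-breaking convention.
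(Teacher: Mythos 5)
Your proof is correct and takes essentially the same route as the paper's: bidder $i$ stays in the maximum-weight independent set of its own grid, and the weight gain of every other shifting is at most that of the winning shifting, so the winning shifting is retained. Your write-up is in fact more careful than the paper's one-paragraph argument, which silently assumes both the order-preservation of the affine masks $\delta_1+\delta_2 w(\cdot)$ and a benign tie-breaking rule --- the two points you treat explicitly.
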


\begin{IEEEproof}
Without loss of generality, we assume that the bidder $i$ wins by bidding $b_i$ in grid $g_{j}^{r,s}$.
Then, $\sigma_1(1)=r$, $\sigma_2(1)=s$ and bidder $i$ in $d_{\sigma(1),j}^{r,s}$. It is not hard to get that the bidder
$i$ is still in $d_{\sigma(1),j}^{r,s}$ when he increases his bid to $b'_i> b_i$. Furthermore, the increased weight
of other shiftings is no more than $(r,s)$-shifting when $i$ increases his bid, which indicates that $\sigma_1(1)=r$ and $\sigma_2(1)=s$ still hold.
Thus, we can conclude that $i$ will always win by bidding $b'_{i}> b_i$.
\end{IEEEproof}

\subsection{\textbf{Payment Calculation with Privacy Preserving}}

We have proved that our allocation mechanism is bid-monotone, which indicates that there exists a
critical value for each bidder.
The bidder $i$ will win the auction by bidding a price
which is higher than its critical value,
otherwise, bidder $i$ will lose in the auction.
To ensure the strategyproofness of our auction mechanism, we will compute the critical value for each winner
as the final payment in the following.

Without loss of generality, we also assume that the bidder $i$ wins by bidding $b_i$ in grid $g_{j}^{r,s}$.
We further assume
that $d_{l(i),j}^{r,s}$ is the maximum independent set with highest weight which does not include bidder $i$, and
$OPT(\sigma_1(f(i)),\sigma_2(f(i)))$ is the optimal solution of $(\sigma_1(f(i)),\sigma_2(f(i)))$-shifting
which has the highest weight and does not
include the bidder $i$.
We will calculate the critical value of the winner $i$ based on the following considerations.
\begin{itemize}
  \item The minimum bid price, denoted as $p_i^1$,   ensures bidder $i$   win in grid $g_{j}^{r,s}$. Then, we can get that
  \begin{equation*}
  p_i^1=w(d_{l(i),j}^{r,s})-w(d_{\sigma(1),j}^{r,s})+b_i
  \end{equation*}

  \item The minimum bid of bidder $i$ which makes $OPT(r,s)$ always with the highest weight among all the optimal
  solutions of shiftings including bidder $i$. We use $p_i^2$ ($p_i^2$ exists \emph{iff} $f(i)>2$) to denote this minimum bid, and set $p_i^{2,q}=w(OPT(\sigma_1(q),\sigma_2(q)))-w(d_{\sigma(1),j}^{\sigma_1(q),\sigma_2(q)})
  +w(d_{l(i),j}^{\sigma_1(q),\sigma_2(q)})-w(OPT(r,s))+b_i$, then
  \begin{eqnarray*}
  p_i^2=\max \{p_i^{2,1},...,p_i^{2,f(i)-2}\}
  \end{eqnarray*}

  \item The minimum bid of bidder $i$ that ensures $w(OPT(r,s))\ge w(OPT(\sigma_1(f(i)),\sigma_2(f(i))))$, which is denoted by
  $p_i^3$. Then, we can get that
  \begin{equation*}
  p_i^3=w(OPT(\sigma_1(f(i)),\sigma_2(f(i))))-w(OPT(r,s))+b_i
  \end{equation*}

\end{itemize}

In conclusion, the critical value of bidder $i$ is $p_i=\max\{p_i^1,p_i^2,p_i^3,0\}$.
 Since the agent knows the order of all the maximum independent sets of each grid and the order of all the
 optimal solution of shiftings, he can compute the encrypted value of $p_i^1$, $p_i^2$ and $p_i^3$ by
 homomorphic operations, respectively. Then, our payment calculation mechanism with privacy preserving is depicted as follows:
 \begin{enumerate}
   \item The agent computes $E(p_i^1)$, $E(p_i^{2,1})$, $\cdots$,
     $E(p_i^{2,f(i)-2})$, $E(p_i^3)$, and sends the results to the auctioneer.
   \item The auctioneer decrypts the ciphertexts and sets the payment
     of winner $i$ as
   \begin{equation*}
   p_i=\max\{p_i^1,p_i^{2,1},...,p_i^{2,f(i)-2},p_i^3,0\}
   \end{equation*}
 \end{enumerate}

%%%%%%%%%%%%%%%%%%%%%%%%%%%%%%%%
\begin{comment}
 %%\vspace{-0.4cm}
\begin{algorithm}
\caption{Payment calculation with privacy preserving for winner
  $i$}\label{alg:4}
{\small
\begin{algorithmic}[1]

\STATE T
\begin{eqnarray*}
E(p_i^1)=E(w(d_{l(i),j}^{r,s}))E(w(d_{\sigma(1),j}^{r,s}))^{-1}E(b_i)
\end{eqnarray*}
\begin{eqnarray*}
E(p_i^{2,q})=E(w(OPT(\sigma_1(q),\sigma_2(q))))E(w(d_{\sigma(1),j}^{\sigma_1(q),\sigma_2(q)}))^{-1}\\
E(w(d_{l(i),j}^{\sigma_1(q),\sigma_2(q)}))^{-1}E(w(OPT(r,s)))E(b_i)
\end{eqnarray*}
\begin{eqnarray*}
&E(p_i^{3})=&E(w(OPT(\sigma_1(f(i)),\sigma_2(f(i)))))\\
&&E(w(OPT(r,s)))^{-1}E(b_i)
\end{eqnarray*}
\STATE
\end{algorithmic}
}
\end{algorithm}
\end{comment}
%%%%%%%%%%%%%%%%%%%%%5

It is easy to prove the following   theorems.
\begin{theorem}\label{critical}
PPS-SUA charges each winner its critical value and is strategyproof.
\end{theorem}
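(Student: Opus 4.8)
The plan is to reduce both assertions to the standard characterization of strategyproof single-parameter auctions. By the result cited above~\cite{nisan2007algorithmic}, an allocation rule that is bid-monotone and charges each winner exactly its critical value is strategyproof; bid-monotonicity is already in hand from Lemma~\ref{monotone}. Hence the whole theorem reduces to verifying the first conjunct --- that the charged payment
\begin{equation*}
p_i=\max\{p_i^1,p_i^{2,1},\ldots,p_i^{2,f(i)-2},p_i^3,0\}
\end{equation*}
coincides with the critical value of each winner $i$, \ie the minimum bid with which $i$ still wins --- after which strategyproofness is immediate.

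To compute the critical value I would freeze all bids except $b_i$ and let $i$ deviate to $b_i'$, using that $i$'s grid membership depends only on $L_i$ and is therefore fixed, and that every set or shifting containing $i$ has weight affine in $b_i'$ with unit slope, whereas every $i$-free quantity (in particular the best $i$-free grid set $d_{l(i),j}^{r,s}$ and the best $i$-free shifting) is constant in $b_i'$. I then write ``$i$ wins'' as the conjunction of three conditions and extract the threshold of each. Condition (i): $i$ must survive in the optimal set of its own grid $g_j^{r,s}$; requiring $w(d_{\sigma(1),j}^{r,s})-b_i+b_i'\ge w(d_{l(i),j}^{r,s})$ gives exactly $b_i'\ge p_i^1$. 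Condition (iii): the shifting $(r,s)$ must stay at least as heavy as the best $i$-free shifting $OPT(\sigma_1(f(i)),\sigma_2(f(i)))$, whose weight does not depend on $b_i'$, which gives $b_i'\ge p_i^3$. Condition (ii), that $(r,s)$ also dominate every other $i$-containing shifting (ranks $2,\ldots,f(i)-1$), is more subtle and I treat it next.

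The delicate point, which I expect to be the main obstacle, is condition (ii). While $i$ is selected in both $(r,s)$ and a rival $i$-containing shifting, their weights fall at the same rate as $b_i'$ decreases, so the rank-$1$ ordering established at $b_i$ is preserved and imposes no new constraint. The comparison becomes binding only once $i$ is dropped from the rival's grid solution, at which moment the rival's weight freezes at the ``$i$-removed'' value $w(OPT(\cdot))-w(d_{\sigma(1),j}^{\cdot})+w(d_{l(i),j}^{\cdot})$ while $w(OPT(r,s))$ keeps decreasing; requiring $w(OPT(r,s))-b_i+b_i'$ to dominate this frozen value yields precisely $p_i^{2,q}$, and maximizing over the ranks $2,\ldots,f(i)-1$ (indices $1,\ldots,f(i)-2$, with $p_i^2$ absent exactly when $f(i)\le 2$) captures all such constraints as $p_i^2$. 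I would therefore need to argue carefully that it is the $i$-removed weight, and not the current weight, that governs this threshold.

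Combining the three thresholds, $i$ wins if and only if $b_i'\ge\max\{p_i^1,p_i^2,p_i^3\}$, and since bids are nonnegative the minimum winning bid is $\max\{p_i^1,p_i^2,p_i^3,0\}=p_i$. A final completeness check is that once all three conditions hold, $(r,s)$ is genuinely the global maximum-weight shifting and $i$ lies in its chosen grid set, so nothing else can unseat $i$; this confirms that $p_i$ is exactly the critical value. Strategyproofness then follows directly from the characterization of~\cite{nisan2007algorithmic}.
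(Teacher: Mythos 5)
Your proposal is correct and takes essentially the approach the paper intends: the paper gives no written proof of this theorem at all (it just says ``It is easy to prove the following theorems''), its implicit justification being exactly your reduction --- Lemma~\ref{monotone} for bid-monotonicity, the construction of $p_i^1$, $p_i^2$, $p_i^3$ as the three winning thresholds, and the monotone-allocation/critical-payment characterization of \cite{nisan2007algorithmic}. The point you flag as delicate in condition (ii) does close, in one line from facts you already state: since $(r,s)$ outweighs every $i$-containing rival shifting at bid $b_i$ and both weights fall at unit rate as $b_i'$ decreases, a rival's weight while it still contains $i$ never exceeds $w(OPT(r,s))-b_i+b_i'$, so the comparison can only bind after the rival freezes at its $i$-removed weight; equivalently $p_i^{2,q}$ never exceeds the rival's drop-out threshold, which is precisely what makes $p_i^{2,q}$ (and not that drop-out point) the true threshold and validates $p_i=\max\{p_i^1,p_i^{2,1},\ldots,p_i^{2,f(i)-2},p_i^3,0\}$ as the critical value.
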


\begin{theorem}
The computation and communication cost of PPS-SUA are all $O(n^{k^{2}+1})$.
\end{theorem}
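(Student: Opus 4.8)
The plan is to bound the cost of each of the three phases of PPS-SUA separately --- bidding, allocation (Algorithms~\ref{alg:2} and~\ref{alg:3}), and payment --- and then observe that the allocation phase dominates. Throughout I treat $k$ as a fixed constant, since it depends only on $\epsilon$ via $k=\frac{1+\epsilon+\sqrt{1+\epsilon}}{\epsilon}$, and I charge each homomorphic encryption, multiplication, exponentiation, or decryption as unit cost because the security parameter is fixed.

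First I would analyze the allocation phase, which is where all the combinatorial enumeration happens. The key counting step is that inside any $k\times k$ grid two bidders sharing an independent set lie at distance at least $1$ (conflict radius $\frac12$), so a packing argument bounds the number of bidders in any maximal independent set by $k^2$; hence the number of maximal independent sets in a single grid is at most $\sum_{i=1}^{k^2}\binom{n}{i}=O(n^{k^2})$, and they can be enumerated and tested for maximality within the target bound. For each such set $d_{i,j}^{r,s}$, Algorithm~\ref{alg:2} forms the single masked ciphertext $E(\delta_1+\delta_2 w(d_{i,j}^{r,s}))$ using $O(k^2)=O(1)$ homomorphic operations, after which the auctioneer decrypts and sorts them. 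Since any $(r,s)$-shifting has at most $n$ non-empty grids and there are exactly $k^2$ shiftings, the total work of Algorithm~\ref{alg:2} over all grids and shiftings is $k^2\cdot n\cdot O(n^{k^2})=O(n^{k^2+1})$. Algorithm~\ref{alg:3} then forms only one ciphertext per shifting, i.e. $O(k^2)=O(1)$ additional work, so the allocation phase costs $O(n^{k^2+1})$ in computation. For communication, the agent sends one ciphertext per maximal independent set and the auctioneer returns the corresponding permutation, which is again $O(n^{k^2+1})$ in total, with Algorithm~\ref{alg:3} contributing only $O(1)$.

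Next I would show that the payment phase is dominated by the allocation phase. All weights appearing in $p_i^1$, $p_i^{2,q}$, and $p_i^3$ --- namely $w(d_{\sigma(1),j}^{\cdot,\cdot})$, $w(d_{l(i),j}^{\cdot,\cdot})$, and $w(OPT(\cdot,\cdot))$ --- are read off the sorted lists already produced during allocation, so no new enumeration is required; in particular, locating the highest-weight maximal set $d_{l(i),j}^{r,s}$ that excludes $i$ is an $O(1)$ scan of a precomputed sorted list. For each winner the agent forms $E(p_i^1)$, the at most $f(i)-2\le k^2$ ciphertexts $E(p_i^{2,q})$, and $E(p_i^3)$ using $O(k^2)=O(1)$ homomorphic operations, and the auctioneer decrypts them. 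With at most $n$ winners (at most one per grid) this is $O(n)$ computation and communication, which is absorbed into the allocation bound.

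Finally I would combine the three bounds, noting that the bidding phase costs only $O(n)$ (one encrypted bid per bidder), to conclude that both the computation and the communication cost of PPS-SUA are $O(n^{k^2+1})$. The main obstacle is the counting in the allocation phase: carefully justifying the $O(n^{k^2})$ bound on the number of maximal independent sets per grid via the distance-$\ge 1$ packing argument, and then verifying in the payment phase that every needed quantity can be reused from the allocation phase so that no hidden re-enumeration inflates the exponent.
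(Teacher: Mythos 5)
Your proposal is correct and takes essentially the same route as the paper, which dismisses this theorem as easy but whose underlying PTAS analysis is exactly your counting argument: at most $k^2$ bidders in any maximal independent set of a $k \times k$ grid, hence $O(n^{k^2})$ maximal independent sets per grid, at most $n$ non-empty grids per shifting, and $k^2$ shiftings, giving $O(k^2 n^{k^2+1}) = O(n^{k^2+1})$ with $k$ constant, with the bidding and payment phases absorbed into this bound. Two harmless imprecisions in your write-up: locating $d_{l(i),j}^{r,s}$ is a scan of a sorted list of length $O(n^{k^2})$ rather than $O(1)$, and a winning grid solution can contain up to $k^2$ winners rather than one---but with at most $n$ winners overall both slips still leave the total within $O(n^{k^2+1})$.
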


\subsection{\textbf{Privacy analysis of PPS-SUA}}

\begin{theorem}\label{bidprivacyTheorem}
PPS-SUA is  bid privacy-preserving.
\end{theorem}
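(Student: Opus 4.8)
The plan is to fix a precise privacy model and then bound, separately, what each of the two non-colluding parties can infer. I will say that PPS-SUA is \emph{bid privacy-preserving} if, under the assumption that the agent and the auctioneer do not collude, the view of either party can be generated (up to computational indistinguishability) without knowledge of any true bid $b_i$; equivalently, no single party can recover the value of any individual bid. The argument rests on two pillars: the semantic security (IND-CPA) of Paillier's cryptosystem, and the hiding provided by the affine masks $\delta_1+\delta_2 w(\cdot)$ introduced in Algorithm~\ref{alg:2} and Algorithm~\ref{alg:3}. The key structural observation, which I would state first, is the asymmetry of knowledge: the agent manipulates only ciphertexts and never holds the decryption key $DK$, whereas the auctioneer holds $DK$ but never receives any lone ciphertext $E(b_i)$ --- it only ever sees masked aggregates and rank permutations.

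First I would analyze the agent's view. Its entire transcript consists of the ciphertexts $E(b_i)$, the public data $L_i,N_i$, the products it forms homomorphically, and the rank permutations $\{\sigma(i)\}$ and $\{(\sigma_1(i),\sigma_2(i))\}$ returned by the auctioneer. Because the agent never possesses $DK$, IND-CPA security implies that the ciphertext portion of its view is computationally independent of the underlying bids: any two bid profiles yield indistinguishable ciphertext transcripts. The only cleartext information the agent gains is the ordering of the (masked) weights, fed back as the permutations. Since the agent knows the set membership but not a single decrypted weight, I would argue by exhibiting a simulator: given only the grid structure and the final allocation, the simulator reproduces a transcript identically distributed to the agent's real view, so the agent cannot distinguish two bid profiles that induce the same orderings and allocation. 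Hence the agent cannot infer any $b_i$.

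Next I would treat the auctioneer. Although it holds $DK$, the agent only ever transmits masked quantities $E(\delta_1+\delta_2 w(d_{i,j}^{r,s}))$ and $E(\delta_3+\delta_4 w(OPT(r,s)))$, never an isolated $E(b_i)$; thus decryption yields only $\delta_1+\delta_2 w(\cdot)$, an affine image of a weight. I would show that with $\delta_1$ uniform over $[1,2^{\gamma_1}]$ and $\delta_2$ uniform over $[1,2^{\gamma_2}]$, and with $\gamma_1,\gamma_2$ large relative to the bid range (as already required for correctness of the modular reduction), the distribution of $\delta_1+\delta_2 w$ is statistically close to one independent of $w$, so that the posterior on any individual weight --- and \eg on any individual bid entering that weight --- is essentially unchanged by the observation. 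For the payment phase I would additionally note that each decrypted quantity $p_i^1,p_i^{2,q},p_i^3$ is a \emph{critical value}, \ie a difference of set-weights in which, after cancellation, the winner's own bid $b_i$ no longer appears; hence even the cleartext payments the auctioneer legitimately learns reveal nothing about the winner's valuation, consistent with Theorem~\ref{critical}.

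The main obstacle will be the hiding claim in the auctioneer analysis, because the same pair $(\delta_1,\delta_2)$ is deliberately reused for all maximal independent sets within a grid so that sorting the masked values reproduces the true weight order (this is exactly what makes the allocation and Lemma~\ref{monotone} correct). This reuse lets the auctioneer form differences $\delta_2\big(w(d_{a})-w(d_{b})\big)$, and hence ratios of weight-differences that are independent of the masks, so the masking is not perfectly hiding. The delicate step is therefore to prove that this residual leakage still fails to pin down any single $b_i$: I would combine a statistical-distance bound on the affine mask with a counting argument showing that, for every observed transcript, a large family of bid profiles is consistent with it, so no individual bid value is determined. Assembling the agent analysis, the auctioneer analysis, and the non-collusion assumption then yields the theorem.
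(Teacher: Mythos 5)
Your two-view decomposition (bound the agent's view via IND-CPA, then bound the auctioneer's view via the masked aggregates, all under non-collusion) is exactly the paper's structure, but one step you lean on would fail, and it is not the one you flagged. The claim that $\delta_1+\delta_2 w$, with $\delta_1$ uniform in $[1,2^{\gamma_1}]$ and $\delta_2$ uniform in $[1,2^{\gamma_2}]$, is statistically close to a distribution independent of $w$ is false even for a single use of fresh masks: the support of $\delta_1+\delta_2 w$ is $[1+w,\;2^{\gamma_1}+2^{\gamma_2}w]$, whose upper end scales linearly with $w$, so for weights $w<w'$ an observation above $2^{\gamma_1}+2^{\gamma_2}w$ has probability roughly $1-w/w'$ under $w'$ and probability $0$ under $w$; the statistical distance is a constant, not negligible. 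The multiplicative mask inherently leaks the rough magnitude of the weight (a lower bound of order $v/2^{\gamma_2}$), quite apart from the mask-reuse issue you identified. Consequently ``the posterior on any individual weight is essentially unchanged'' cannot be established, and your whole auctioneer analysis collapses onto the counting argument that you only sketch. The paper never needs the statistical claim, because its argument is anonymity-based: the auctioneer decrypts masked weights of independent sets and of shiftings, but it is never told which bidders constitute those sets (only rank indices $\sigma(i)$ and $(\sigma_1(i),\sigma_2(i))$ cross the channel), so even exact knowledge of a weight could not be attributed to any bidder's bid. This ingredient is absent from your auctioneer analysis, and it is the load-bearing one.

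A second, related gap is in the payment phase. Your cancellation observation is correct --- $b_i$ drops out of $p_i^1$, $p_i^{2,q}$, $p_i^3$, so the winner's own bid is protected, consistent with Theorem~\ref{critical} --- but it says nothing about everyone else: the cleartext payments are signed sums of \emph{other} bidders' bids, e.g.\ $p_i^1=\sum_{l\in d_{l(i),j}^{r,s}}b_l-\sum_{l\in d_{\sigma(1),j}^{r,s},\,l\neq i}b_l$, and in the degenerate case where the two sets differ in a single bidder $j$ one has $p_i^1=b_j$ exactly. What blocks the auctioneer from exploiting this is, again, that it does not know the membership of $d_{l(i),j}^{r,s}$ and $d_{\sigma(1),j}^{r,s}$; the paper phrases this as the resulting system of equations being underdetermined from the auctioneer's standpoint. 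To repair your proof, either import this anonymity/underdetermination argument wholesale or extend your consistent-profiles counting argument so that it explicitly covers the decrypted payment values; as written, the privacy of non-winning bidders in the payment phase is unproven.
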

\begin{proof}
To confirm the bid privacy, we consider the view of
agent and auctioneer, respectively.

During our auction mechanism for SUA, the agent can obtain nothing
but the encrypted bids and the sorting results of the weight of each grid
and each shifting. Based on the IND-CPA security of homomorphic
cryptosystem, the agent cannot learn more information about the bid of
any bidder.

The auctioneer holds the decryption key. Nevertheless, he has no direct
access to the encrypted bids. While computing the optimal allocation
and critical value of winner $i$, the auctioneer can receive the encrypted
 weight of maximal independent sets in each grid, weight of the optimal solution of each
 shifting, and $\{p_i^1,p_i^{2,1},...,p_i^{2,f(i)-2},p_i^3\}$.
 From the weight of solutions in the grids or shiftings, the auctioneer
cannot infer any bid, since they are encrypted by the agent and the auctioneer has
no idea about which bidders are in these solutions, except the winning shifting.
Consider $\{p_i^1,p_i^{2,1},...,p_i^{2,f(i)-2},p_i^3\}$, auctioneer can construct
the equation of them. However, the bid value of bidder $i$ can still be well preserved, as auctioneer does
not know any value of the variables in these equations.
\end{proof}

\section{A Multi-unit Scheme}\label{sec:heterogeneous}
In this section, we propose a strategyproof auction mechanism
for MUA, namely PPS-MUA, which maximizes the social efficiency and protects
the bid privacy of bidders.
Then, we  design an extended version of PPS-MUA, namely PPS-EMUA, to improve the
average performance of PPS-MUA.

\subsection{\textbf{Initialization and Bidding}}

The initialization and bidding procedure in MUA is similar as that in SUA,
which can be referred in section \ref{sec:suaini}.
At last, each bidder $i$ encrypts his bid $b_i$ by using the encryption key of the auctioneer,
and only sends $(E(b_i), N_i, L_i)$ to the agent.

\subsection{\textbf{Allocation Mechanism with Privacy Preserving}}

%We have proved that the allocation problem which
%maximizes the social efficiency in SUA is an NP-hard problem.
Since SUA is a special case of MUA, the optimal allocation issue in MUA is also
NP-hard.
Thus, we will introduce a simple allocation mechanism which
approximates the social efficiency.
We first subdivide the plane into grids at size $2*2$,
and use the symbol $g^l$ to denote the $l$-th $2*2$ grid.
It is obvious that there are four
$1*1$ grids in each $g^l$.
These four $1*1$ grids can be categorized into four types as shown in Fig. $2(a)$. Let $g_r^l$ be the $1*1$
grid in $g^l$ with type $r$, $g_r$ be the set of $1*1$ grids with type $r$.
We also assume that
the conflict radius of each bidder is $\frac{1}{2}$ and
regard each bidder as a unit disk.
%We say a disk is located in $g_r^l$, if his center is located in $g_r^l$.
Obviously, each bidder  located in $g_r^l$ cannot conflict with the
bidders  located in $g_r^{l'}$ when $l\neq l'$.
Let $OPT(g_r^l)$ be the optimal solution of allocation problem in
$g_r^l$, $OPT(g_r)$ be the optimal solution of the
allocation problem in $g_r$, then $OPT(g_r)=\bigcup \nolimits_{l}{OPT(g_r^l)}$.

\input{grid.TpX}

%However, the conflict-free bidders can also share one channel in each grid $g_r^l$, and the optimal allocation
%problem in $g_r^l$ is the same as that in the whole plane. Thus,
Note that we cannot get the optimal solution in each grid
$g_r^l$. To tackle this, we further subdivide each $1*1$
grid $g_r^l$ into four $\frac{1}{2}*\frac{1}{2}$ sub-grids as shown in Fig. $2(b)$, which are denoted by $g_{r,1}^l$,
$g_{r,2}^l$, $g_{r,3}^l$ and $g_{r,4}^l$, separately. Notice that all the bidders located in the same
sub-grid $g_{r,s}^l$ conflict with each other. Thus, one channel can only be sold to one bidder in $g_{r,s}^l$.

The optimal allocation problem in each sub-grid $g_{r,s}^l$ can be reduced to a \emph{knapsack problem} (KP).
\begin{comment}
where the bid value of bidders is the value of items in KP,
the number of channels in the market and channel demand of each bidder
is the total volume of the knapsack and the volume of each item respectively.
\end{comment}
Although the KP is an NP-hard
problem, there exists a PTAS \cite{lai2006knapsack}, and a greedy allocation mechanism with
approximation factor of $2$ (the details can be referred to lines $3$-$5$ in Algorithm \ref{alg:5}).
It is hard to design a privacy preserving version of the PTAS based on dynamic programming, thus,
we design our allocation mechanism for MUA based on the greedy allocation mechanism in
each sub-grid $g_{r,s}^l$. Assume that $APP(\mathcal{B})$, $APP(g_r)$, $APP(g_r^l)$ and
$APP(g_{r,s}^l)$ are the approximation solution of the
allocation problem in the whole plane, $g_r, g_r^l$ and $g_{r,s}^l$, separately.
 We choose the $APP(g_{r,s}^l)$ with biggest weight as the solution of grid $g_{r}^l$ and the
$APP(g_r)$ with the biggest weight as our final solution $APP(\mathcal{B})$ (the details is depicted in Algorithm \ref{alg:5}).

\begin{theorem}
 Our auction mechanism for MUA has an approximation factor of $32$.
\end{theorem}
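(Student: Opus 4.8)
The plan is to establish a chain of approximation inequalities that climbs from the fully conflicting $\frac{1}{2}\times\frac{1}{2}$ sub-grids up to the whole plane, multiplying at each level the loss the algorithm incurs. There are exactly three sources of loss: the greedy knapsack solver used inside a sub-grid (factor $2$), the rule that keeps only the best of the four sub-grids inside each $1\times 1$ grid (factor $4$), and the rule that keeps only the best of the four types of $1\times 1$ grids (factor $4$). Since $2\cdot 4\cdot 4=32$, tracking these losses carefully will give the bound. Throughout I would write everything in terms of $w(\cdot)$, so the statement to reach is $w(OPT(\mathcal{B}))\le 32\,w(APP(\mathcal{B}))$.

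First I would treat the bottom level. Every pair of bidders in a sub-grid $g_{r,s}^l$ lies within distance $\frac{\sqrt{2}}{2}<1$, hence mutually conflicts and cannot share a channel, so a feasible allocation there is precisely a set of bidders with total demand $\sum N_i\le m$, i.e.\ a knapsack instance; the cited greedy mechanism then gives $w(OPT(g_{r,s}^l))\le 2\,w(APP(g_{r,s}^l))$. Next, for a single $1\times 1$ grid $g_r^l$, its four sub-grids partition it, and the restriction of the optimal allocation $OPT(g_r^l)$ to any one sub-grid is itself a feasible knapsack solution there (a subset of the channels used, inheriting the demand bound). Hence $w(OPT(g_r^l))=\sum_{s=1}^4 w(OPT(g_r^l)\cap g_{r,s}^l)\le\sum_{s=1}^4 w(OPT(g_{r,s}^l))\le 4\max_s w(OPT(g_{r,s}^l))$. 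Combining this with the knapsack bound and the algorithmic choice $APP(g_r^l)=\max_s APP(g_{r,s}^l)$ yields $w(OPT(g_r^l))\le 8\,w(APP(g_r^l))$.

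Then I would climb two more levels. Since bidders in $g_r^l$ and $g_r^{l'}$ with $l\neq l'$ never conflict and can reuse the same channels, both the optimum and the algorithm's output split as disjoint unions over $l$, so $w(OPT(g_r))=\sum_l w(OPT(g_r^l))$ and $w(APP(g_r))=\sum_l w(APP(g_r^l))$; summing the per-block inequality preserves the constant and gives $w(OPT(g_r))\le 8\,w(APP(g_r))$. Finally, the four types partition all bidders, and restricting a global optimum $OPT(\mathcal{B})$ to type $r$ gives a feasible type-$r$ solution, so $w(OPT(\mathcal{B}))\le\sum_{r=1}^4 w(OPT(g_r))\le 4\max_r w(OPT(g_r))$. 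Using $APP(\mathcal{B})=\max_r APP(g_r)$ together with the previous display produces $w(OPT(\mathcal{B}))\le 4\cdot 8\,w(APP(\mathcal{B}))=32\,w(APP(\mathcal{B}))$, as claimed.

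The main obstacle I expect is verifying that every ``restrict-then-bound'' step is direction-correct under the multi-unit channel constraint, rather than the final arithmetic. Concretely, one must check that the restriction of a feasible (or optimal) allocation to a sub-region is feasible for the corresponding sub-problem, which is clear for conflict constraints but should be stated explicitly for the channel budget inside each fully conflicting clique. The genuinely lossy step deserving the most care is the factor $4$ from discarding all but one sub-grid: within a single $1\times 1$ grid the optimum may legitimately combine conflict-free bidders drawn from several sub-grids (since the grid has diameter $\sqrt{2}>1$), so this step is not lossless and the bound $w(OPT(g_r^l))\le 4\max_s w(OPT(g_{r,s}^l))$ is essentially tight in the worst case. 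I would also make explicit the strict inequality in the conflict definition (distance $<1$) so that same-type cells across adjacent blocks, separated by exactly $1$ unit, are conflict-free and the cross-block decomposition is exact.
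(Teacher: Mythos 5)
Your proposal is correct and follows essentially the same argument as the paper's own proof: the same three-level decomposition (fully conflicting $\frac{1}{2}\times\frac{1}{2}$ sub-grids, best-of-four sub-grids per $1\times 1$ cell, best-of-four grid types), the same factor-$2$ greedy knapsack bound, and the same $2\cdot 4\cdot 4=32$ accounting. Your write-up is in fact somewhat more careful than the paper's, since you make explicit the cross-block additivity $w(OPT(g_r))=\sum_l w(OPT(g_r^l))$ and $w(APP(g_r))=\sum_l w(APP(g_r^l))$ (which the paper dismisses with ``in a similar way'') and the feasibility of restrictions under the channel budget, but these are elaborations of the identical route rather than a different one.
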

\begin{proof}
Assume that $OPT(\mathcal{B})$ is the optimal solution of our original allocation problem,
and $OPT_r(\mathcal{B})=\{i|i\in OPT(\mathcal{B})$ and $i$ is allocated in $g_r\}$. Then, we can get that
\begin{eqnarray*}
&w(OPT(\mathcal{B}))&=\sum \limits_{1\le r \le 4} {w(OPT_r(\mathcal{B}))} \le \sum \limits_{1\le r \le 4} {w(OPT(g_r))} \\
&& \le 4\max \{w(OPT(g_r))\}_{1\le r \le 4}
\end{eqnarray*}
where $w(\cdot)$ is an operation to compute the weight of solutions.
For each grid $g_r^l$, we can get that
$w(OPT(g_r^l)) \le \sum \limits_{1\le s \le 4} {w(OPT(g_{r,s}^l))}
 \le 4\max \{w(OPT(g_{r,s}^l))\}_{1\le s \le 4}$.

Since we sort bidders in non-increasing order according to their
per-unit bidding
prices, so user $i$ has the $i$-$th$ largest value in $\frac{b_i}{N_i}$
and $\sum\nolimits_{i=0}^{k}{N_i}>m$,
$\sum \nolimits_{i=0}^{k}{b_i}>w(OPT(g_{r,s}^l))$. Our approximation allocation mechanism sets $APP(g_{r,s}^l)=\{1, 2, ..., k-1\}$ if
$\sum \nolimits_{i=0}^{k-1}{b_{i}}\ge b_k$; otherwise, we set $APP(g_{r,s}^l)=\{k\}$. Thus, $OPT(g_{r,s}^l)\le 2 APP(g_{r,s}^l)$.
Because we choose the $APP(g_{r,s}^l)$ with biggest weight as $APP(g_{r}^l$), we can further get that
$OPT(g_{r}^l)\le 4 \max(OPT(g_{r,s}^l))_{1\le s \le 4}\le 8 APP(g_{r}^l)$. In a similar way, we can get that
$OPT(\mathcal{B})\le 4\max(OPT(g_{r}))_{1\le r \le 4}\le 32 APP(\mathcal{B})$.
\end{proof}

%%\vspace{-0.4cm}
\begin{algorithm}
\caption{Channel allocation mechanism for MUA}\label{alg:5}
{%\small
\begin{algorithmic}[1]
\FOR {each sub-grid $g_{r,s}^l$}
\IF {The number of channels that all the bidders located in $g_{r,s}^l$ want to buy is larger than $m$ }
\STATE Sorting the bidders that located in $g_{r,s}^l$ in non-increasing order according to their per-unit bid values $\frac{b_i}{N_i}$,
where $\sigma(i)$ is the bidder with $i$-th per-unit bid value in the sorted list;
\STATE Find the critical bidder $\sigma(k)$ in the sorted bidder list, which satisfies:
\begin{equation*}
\sum \nolimits_{i=1}^{k-1}{N_{\sigma(i)}} \le m < \sum \nolimits_{i=1}^k{N_{\sigma(i)}};
\end{equation*}
\STATE Set $APP(g_{r,s}^l)=\{\sigma(1), \sigma(2), ..., \sigma(k-1)\}$ if $\sum \nolimits_{i=1}^{k-1}{b_{\sigma(i)}}\ge b_{\sigma(k)}$; otherwise,
set $APP(g_{r,s}^l)=\{\sigma(k)\}$;
\ELSE
\STATE Set $APP(g_{r,s}^l)$ is all the bidders that located in $g_{r,s}^l$;
\ENDIF
\ENDFOR
\FOR {each grid $g_r^l$}
\STATE Set $s'= \arg \max \limits_{s}\{w(APP(g_{r,s}^l))|1\le s\le 4\} $, where $w(\cdot)$ is an operation to compute the weight of solutions.
\STATE Set $APP(g_r^l)=APP(g_{r,s'}^l)$;
\ENDFOR
\FOR {$r=1$ to $4$}
\STATE Set $APP(g_r)=\bigcup_{l}{APP(g_{r}^l)}$;
\ENDFOR
\STATE Set $r'=\arg \max \limits_{r}\{w(APP(g_r))|1\le r\le 4\} $;
\STATE Return $APP(\mathcal(B))=APP(g_{r'})$ as the final solution;
\end{algorithmic}
}
\end{algorithm}

In order to protect the true bid value of bidders, the agent confuses the ID of bidders by using a permutation
$\pi: \mathbb{Z}_n \rightarrow \mathbb{Z}_n$ after receiving the encrypted bid of bidders. Then, the privacy
preserving version of our approximation allocation mechanism is
depicted  in Algorithm \ref{alg:6}.
%It is easy to prove the following lemma
\begin{lemma}\label{lemma:muamonotone}
 Our allocation mechanism for MUA is bid-monotone.
\end{lemma}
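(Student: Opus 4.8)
The plan is to reduce the bid-monotonicity of the whole mechanism to the bid-monotonicity of the greedy knapsack routine run inside a single sub-grid, and then argue that raising a winner's bid cannot dislodge the two $\arg\max$ selections performed in Algorithm \ref{alg:5} (choice of sub-grid $s'$ within a $1\times 1$ grid, and choice of type $r'$ within the plane). Suppose bidder $i$ wins with bid $b_i$. Since each bidder lies in exactly one sub-grid $g_{r,s}^l$, the win means three things hold at once: $i\in APP(g_{r,s}^l)$ produced by the greedy of lines 3--7; the index $s$ attains $\max_{s} w(APP(g_{r,s}^l))$ so that $APP(g_r^l)=APP(g_{r,s}^l)$; and the type $r$ attains $\max_r w(APP(g_r))$ so that the returned set is $APP(g_r)$. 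I would show each of the three conditions survives the raise to $b'_i>b_i$.

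The core step is that the greedy sub-grid allocation is itself bid-monotone. Fixing $g_{r,s}^l$ and writing $\frac{b_j}{N_j}$ for per-unit bids, I would observe that increasing $b_i$ only increases $i$'s per-unit value, so $i$ can only move up (to an earlier position) in the sorted list while the relative order of every other bidder is unchanged. I would then split on how $i$ won: case (I), $i$ lies in the selected prefix $\{\sigma(1),\dots,\sigma(k-1)\}$; case (II), $i=\sigma(k)$ is the critical bidder selected as the singleton. The decisive observation is that moving $i$ up merely front-loads its demand $N_i$, so the partial demand sums at positions at or beyond $i$'s old index are unchanged. In case (I) this pins the critical index at $k$ and leaves the selected prefix the same set of bidders, now heavier, so the test $\sum_{j<k} b_{\sigma(j)}\ge b_{\sigma(k)}$ still picks the prefix and $i$ still wins. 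In case (II) I would exploit that the singleton was chosen precisely because $b_i>\sum_{j<k} b_{\sigma(j)}$: after the raise either $i$ stays critical, where $b'_i>b_i>\sum_{j<k} b_{\sigma(j)}$ keeps the singleton, or $i$ slides into a prefix, in which case that prefix already contains $b'_i>b_i$, which dominates the sum of any earlier bidders and hence the new critical bidder's weight, so the prefix is selected and again $i$ wins.

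Finally, with $i\in APP(g_{r,s}^l)$ preserved, raising $b_i$ strictly increases $w(APP(g_{r,s}^l))$—the case analysis shows the newly selected set either keeps the same members and gains $b'_i-b_i$, or is replaced by a heavier set still containing $i$—while every other sub-grid is untouched because $i$ sits in no other sub-grid. Hence $g_{r,s}^l$ becomes the strict maximizer among the four sub-grids of $g_r^l$, so $APP(g_r^l)$ still equals $APP(g_{r,s}^l)$; and since $i$ belongs to no sub-grid of any type $r''\ne r$, the weight of type $r$ strictly increases while the weights of the other types are unchanged, keeping $r$ the $\arg\max$ and the returned set containing $i$. The strict increase also disposes of any tie-breaking concern in the two $\arg\max$ steps. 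The main obstacle I anticipate is case (II) of the greedy argument: tracking how $i$'s upward move can relocate the critical bidder and flip the prefix-versus-singleton decision, and checking that the singleton-selection inequality $b_i>\sum_{j<k} b_{\sigma(j)}$ is exactly strong enough to force the prefix whenever $i$ migrates into it.
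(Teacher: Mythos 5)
Your proposal is correct and follows essentially the same route as the paper's proof: decompose winning into membership in $APP(g_{r,s}^l)$ together with the two $\arg\max$ selections ($s$ within $g_r^l$ and $r$ over the plane), prove the greedy sub-grid step is bid-monotone by the prefix-versus-critical-bidder case split, and then observe that raising $b_i$ increases only $w(APP(g_{r,s}^l))$ while leaving all other sub-grids and types unchanged, so both selections survive. If anything, your handling of the case where the critical bidder $\sigma(k)$ slides up into a prefix (bounding the new critical bidder's bid by the old prefix sum, hence by $b_i < b'_i$) is more careful than the paper's own argument, which glosses over that subcase.
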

%%\LXYCUT{%%%%%%%%
\begin{proof}
Assume bidder $i$ is located in grid $g_{r,s}^l$ and wins the auction by bidding $b_i$, then he must be in the solutions
$APP(g_{r,s}^l), APP(g_r^l)$ and $APP(\mathcal{B})$ at the same time.
Thus, we will check if the bidder $i$ still belongs to these
solutions when he bids $b'_i>b_i$ in the following.

First, we consider the solution $APP(g_{r,s}^l)$. Obviously, the rank of bidder $i$ will not decrease
 when bidder $i$ increases his bidding value. Thus, $b'_i$ is always larger than the sum bid of the top $k-1$ bidders
 when $i=\sigma(k)$, which means $i$ will remain in $APP(g_{r,s}^l)$ in this case.
 In another case, all the bidders with top
 ($k-1$) per-unit bid remains unchanged when $i$ bids $b'_i>b_i$, and thus their sum bid is still larger than the $k$-th bid.
 Thus, $i$ will always win the auction when he increases his bid.

Then, we consider the solutions $APP(g_{r}^l)$ and $APP(\mathcal{B})$.
When  $i$ bids $b'_i>b_i$, the $w(APP(g_{r,s}^l))$ will
increase, and $w(APP(g_{r,s'}^l))$ will keep unchanged if $s'\neq s$.
Thus, $APP(g_{r,s}^l)$ still has the highest weight and will be
selected as $APP(g_{r}^l)$.
Similarly,  $APP(g_{r})$ will be selected as the final allocation
$APP(\mathcal{B})$ either.

Bidder $i$ will always win by bidding $b'_i>b_i$ if he wins by bidding
$b_i$,  \ie, our allocation mechanism is bid-monotone.
\end{proof}
%%%}%%%%%%%%%%%%%%%%%

%%\vspace{-0.4cm}
\begin{algorithm}
\caption{Channel allocation mechanism for MUA with bid privacy}\label{alg:6}
{%\small
\begin{algorithmic}[1]
\FOR {each sub-grid $g_{r,s}^l$}
\IF {The number of channels that all the bidders located in $g_{r,s}^l$ want to buy is larger than $m$ }
\STATE The agent randomly chooses two integers $\delta_{r,1}^l \in \mathbb{Z}_{2^{\gamma_{1}}}$, $\delta_{r,2}^l \in \mathbb{Z}_{2^{\gamma_{2}}}$,
computes and sends $(\pi(i), E(\delta_{r,1}^l b_i+\delta_{r,2}^l),N_i)$ to the auctioneer if $i$ is located in $g_{r,s}^l$.
\STATE The auctioneer decrypts and sorts the per-unit bids of bidders in non-increasing order;
\STATE The auctioneer finds the critical bidder $\sigma(k)$ in the sorted bidder list, and sends $(\{\sigma(i)\}_{i<k}, \sigma(k))$
to the agent;
\STATE The agent computes and sends $E(\delta_{r,1}^l \sum \nolimits_{i=1}^{k-1}{b_{\sigma(i)}}+\delta_{r,2}^l))$
to the auctioneer;
\STATE The auctioneer sends $\{\sigma(i)\}_{i<k}$ to the agent if $\sum \nolimits_{i=1}^{k-1}{b_{\sigma(i)}}\ge b_{\sigma(k)}$; otherwise, he sends $\sigma(k)$;
\STATE The agent sets $APP(g_{r,s}^l)$ includes all the bidders that the auctioneer sent to him;
\ELSE
\STATE The agent sets $APP(g_{r,s}^l)$ as all the bidders  located in $g_{r,s}^l$;
\ENDIF
\ENDFOR
\FOR {each grid $g_r^l$}
\STATE The agent chooses two integers $\delta_{r,3}^l \in \mathbb{Z}_{2^{\gamma_{1}}}$, $\delta_{r,4}^l \in \mathbb{Z}_{2^{\gamma_{2}}}$,
computes $\{(s, E(\delta_{r,3}^l w(APP(g_{r,s}^l))+\delta_{r,4}^l)\}_{1\le s \le 4}$ and sends them to the auctioneer.
\STATE The auctioneer decrypts the ciphertexts and finds $s'= \arg \max \limits_{s}\{w(APP(g_{r,s}^l))|1\le s\le 4\} $. Then,
he sends $s'$ to the agent.
\STATE The agent sets $APP(g_r^l)=APP(g_{r,s'}^l)$;
\ENDFOR
\FOR {$r=1$ to $4$}
\STATE The agent sets $APP(g_r)=\bigcup_{l}{APP(g_{r}^l)}$;
\ENDFOR
\STATE The agent chooses two integers $\delta_{1}^l \in \mathbb{Z}_{2^{\gamma_{1}}}$, $\delta_{2}^l \in \mathbb{Z}_{2^{\gamma_{2}}}$,
computes $\{(r, E(\delta_{1}^l w(APP(g_r))+\delta_{2}^l)\}_{1\le r \le 4}$ and sends them to the auctioneer.
\STATE The auctioneer decrypts the ciphertexts and finds $r'=\arg \max \limits_{r}\{w(APP(g_r))|1\le r\le 4\} $. Then,
he sends $r'$ to the agent;
\STATE The agent sets $APP(\mathcal{B})=APP(g_{r'})$, and sends $APP(\mathcal{B})$ to the auctioneer as the final solution;
\end{algorithmic}
}%%%%%%%%%%%
\end{algorithm}

\subsection{\textbf{Payment Calculation with Privacy Preserving}}

%After allocating channels to the bidders in our allocation mechanism,
%we need to compute a critical value for each winner as their payment.
We now consider the procedure of payment calculation for a winner
$i$ which is located in grid $g_{r,s}^l$.

Since the bidder $i$ wins the auction, we can conclude that:
1) $i\in APP(g_{r,s}^l)$;
2) $APP(g_{r}^l)=APP(g_{r,s}^l)$; and 3) $APP(\mathcal{B})=APP(g_r)$. We first consider the minimum bid value
of bidder $i$, denoted by $p_i^1$, with which the bidder $i$ will be put in $APP(g_{r,s}^l)$.
In the case that all the bidders
located in $g_{r,s}^l$ win the auction, we set $p_i^1=0$; otherwise, we assume that $i=\sigma(j)$
in the sorted bidder list of $g_{r,s}^l$ when $i$ bids $b_i$, then the process of $p_i^1$ computation
is shown in Algorithm \ref{alg:7}.

Under the assumption that $APP(g_{r,s}^l)$ keeps unchanged, we suppose $p_i^2$ is the minimum bid value of bidder $i$
that makes $APP(g_{r}^l)=APP(g_{r,s}^l)$, $p_i^3$ is the minimum bid value of bidder $i$ that makes
$APP(\mathcal{B})=APP(g_r)$. Then, we have
\begin{equation*}
p_i^2=\max\{w(APP(g_{r,s'}^l))|s'\neq s\}-w(APP(g_{r,s}^l))+b_i
\end{equation*}
\begin{equation*}
p_i^3=\max\{w(APP(g_{r'}))|r'\neq r\}-w(APP(g_r))+b_i
\end{equation*}

The critical value of bidder $i$ is $p_i=\max(p_i^1, p_i^2, p_i^3)$. Next we will show that
we can compute the critical value for each winner without leaking the true bid value of bidders.

%%\vspace{-0.4cm}
\begin{algorithm}
\caption{$p_i^1$ computation for winner $i$ in MUA}\label{alg:7}
{%\small
\begin{algorithmic}[1]
\STATE Set $j=j+1$;
\STATE Set $b'_i=\frac{b_{\sigma(j)}N_i}{N_{\sigma(j)}}$;
\STATE Run lines $3\thicksim 5$ of Algorithm \ref{alg:5} to check if bidder $i$ will win by bidding $b'_i$;
\IF {$i$ wins by bidding $b'_i$}
\STATE Repeat steps $1 \thicksim 3$ until $i$ lose the auction;
\ENDIF
\IF {$i$ is the $k$-th bidder when he bids $b'_i$}
\STATE Set $p_i^1=\max (\sum \nolimits_{q=1}^{k-1}{b_{\sigma'(q)}}, b'_i)$, where $\sigma'(q)$ is the bidder
with $q$-th per-unit bid when $i$ bids $b'_i$;
\ELSE
\STATE Set $p_i^1=\max (b_{\sigma'(k)}+b_i-\sum \nolimits_{q=1}^{k-1}{b_{\sigma'(q)}}, b'_i)$;
\ENDIF
\end{algorithmic}
}%%%%%%%%%%%%
\end{algorithm}

Since the agent can compute $E(\delta_{r,1}^l b'_i N_{\sigma(j)}+\delta_{r,2}^l N_{\sigma(j)})$ which is
equal to $E(\delta_{r,1}^l b_{\sigma(j)} N_{i}+\delta_{r,2}^l N_{\sigma(j)})$, the auctioneer can decrypt
and compute the value of $\delta_{r,1}^l b'_i+\delta_{r,2}^l$. Thus, the auctioneer and agent can check
if bidder $i$ will win the auction by bidding $b'_i$ as they did in lines $3\thicksim 7$ of Algorithm \ref{alg:6}.
Further, the agent can get $\max\{w(APP(g_{r,s'}^l))|s'\neq s\}$ and $\max\{w(APP(g_{r'}))|r'\neq r\}$ via
 communicating with the auctioneer. Thus, the agent can choose two integers $\delta_1 \in \mathbb{Z}_{2^{\gamma_{1}}}$,
  $\delta_2 \in \mathbb{Z}_{2^{\gamma_{2}}}$ and compute the ciphertexts of $\delta_1 p_i^1+\delta_2$,
 $\delta_1 p_i^2+\delta_2$ and $\delta_1 p_i^3+\delta_2$
 through homomorphic operations, and sends them to the auctioneer. Then, the auctioneer decrypts these
ciphertexts, sets $\delta_1 p_i+\delta_2=\max (\delta_1 p_i^1+\delta_2,\delta_1 p_i^2+\delta_2,\delta_1 p_i^3+\delta_2)$
 and sends $\delta_1 p_i+\delta_2$ to the agent. After computing the payment $p_i$ of each winner $i$,
 the agent sends them to the auctioneer.

From above analysis, we can conclude that:
\begin{theorem}\label{lemma:muacritical}
We charge each winner its critical value in PPS-MUA.
PPS-MUA is strategyproofness.
\end{theorem}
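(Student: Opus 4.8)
The plan is to prove Theorem~\ref{lemma:muacritical} in two parts, following the standard recipe for strategyproofness stated in Section~\ref{sec:target}: since Lemma~\ref{lemma:muamonotone} already establishes that the MUA allocation is bid-monotone, it suffices to show that the payment $p_i=\max(p_i^1,p_i^2,p_i^3)$ computed above equals the \emph{critical value} of each winner $i$, i.e.\ the infimum of bids with which $i$ still wins. Charging the critical value to a winner in a bid-monotone mechanism is known to yield strategyproofness \cite{nisan2007algorithmic}, so the whole task reduces to the critical-value identity.

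First I would recall from Lemma~\ref{lemma:muamonotone} that a bidder $i$ located in $g_{r,s}^l$ wins if and only if three events hold simultaneously: (1) $i\in APP(g_{r,s}^l)$; (2) $APP(g_r^l)=APP(g_{r,s}^l)$, i.e.\ sub-grid $s$ attains the maximum weight among the four sub-grids of $g_r^l$; and (3) $APP(\mathcal{B})=APP(g_r)$, i.e.\ type $r$ attains the maximum weight among the four types. I would then argue that each of the three quantities $p_i^1,p_i^2,p_i^3$ is exactly the threshold bid for one of these three events, under the assumption that the other bidders' bids are held fixed and the winning set structure is kept as given. For event~(2) and (3) the thresholds are straightforward: raising $b_i$ increases $w(APP(g_{r,s}^l))$ and $w(APP(g_r))$ linearly while leaving the competing weights $w(APP(g_{r,s'}^l))$ ($s'\neq s$) and $w(APP(g_{r'}))$ ($r'\neq r$) unchanged, so the minimal bid making sub-grid $s$ (resp.\ type $r$) the winner is obtained by equating the two weights, which gives precisely the stated $p_i^2$ and $p_i^3$.

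The more delicate part is verifying that $p_i^1$, computed in Algorithm~\ref{alg:7}, is the true threshold for membership in $APP(g_{r,s}^l)$ under the greedy rule of lines~3--5 of Algorithm~\ref{alg:5}. Here I would trace the greedy selection as $b_i$ decreases: lowering $i$'s bid moves it down in the per-unit-price order, changing the critical index $k$ and hence the candidate sets; Algorithm~\ref{alg:7} walks $i$ down the sorted list position by position (setting $b_i'=b_{\sigma(j)}N_i/N_{\sigma(j)}$ to place $i$ just at rank $j$) until $i$ loses, and then distinguishes the two greedy branches—whether $i$ ends up as the critical $k$-th bidder or inside the top $k-1$—to pin down the exact boundary bid. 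I would show in each branch that the formula $\max(\sum_{q<k}b_{\sigma'(q)},b_i')$ or $\max(b_{\sigma'(k)}+b_i-\sum_{q<k}b_{\sigma'(q)},b_i')$ equals the smallest bid keeping $i$ in $APP(g_{r,s}^l)$, by matching it against the greedy tie-break condition $\sum_{q<k}b_{\sigma(q)}\ge b_{\sigma(k)}$. The main obstacle is precisely this branch analysis: because the greedy rule outputs either the top $k-1$ bidders or the single critical bidder, the membership-threshold as a function of $b_i$ is piecewise and non-smooth, and one must check that Algorithm~\ref{alg:7} correctly captures the discrete transition points rather than a naive linear equation.

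Finally I would combine the three thresholds: since winning requires all three events, the overall critical value is the largest individual threshold, $p_i=\max(p_i^1,p_i^2,p_i^3)$. Bidding below any one of them causes $i$ to drop out of the corresponding set and hence to lose, while bidding at or above the maximum keeps all three events satisfied by bid-monotonicity, so $i$ wins. This shows $p_i$ is exactly the critical value, and invoking the critical-value characterization of strategyproofness for bid-monotone mechanisms completes the proof.
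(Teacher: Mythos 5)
Your proposal is correct and takes essentially the same route as the paper: the paper's own (extremely terse) justification is precisely the decomposition of winning into the three conditions $i\in APP(g_{r,s}^l)$, $APP(g_r^l)=APP(g_{r,s}^l)$, and $APP(\mathcal{B})=APP(g_r)$, with $p_i^1,p_i^2,p_i^3$ as the respective threshold bids, $p_i=\max(p_i^1,p_i^2,p_i^3)$ declared to be the critical value, and Lemma~\ref{lemma:muamonotone} together with the standard bid-monotone-plus-critical-payment characterization yielding strategyproofness. If anything, your branch analysis of Algorithm~\ref{alg:7} and your explicit caveat about the piecewise behavior of the greedy rule in lines 3--5 of Algorithm~\ref{alg:5} are more careful than the paper, which states the theorem with no proof beyond ``from above analysis.''
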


\subsection{\textbf{Extended Auction Mechanism for MUA}}

We have designed a simple allocation mechanism for MUA, which provides an approximation factor of $32$.
However, PPS-MUA only chooses the solution
of a $\frac{1}{2}*\frac{1}{2}$ sub-grid as the final solution of a $2*2$ grid,
while dropping all the other bidders that located in other $15$ sub-grids.
Although the allocation in this way provides a guarantee
for the worst case performance, the average performance may be relatively low.
To address this issue, we    extend
 our allocation mechanism by supplementing the solution with other
 bidders as shown in Algorithm \ref{alg:8}.

%%\vspace{-0.4cm}
\begin{algorithm}
\caption{Extended Allocation Mechanism PPS-EMUA}\label{alg:8}
{%\small
\begin{algorithmic}[1]
\STATE Run Algorithm \ref{alg:5} to allocate channels to bidders;
\STATE Sort all the bidders who lose in Algorithm \ref{alg:5} in non-increasing order according to their bid values.
\FOR {each loser $i$ in the sorted list}
\IF {we can allocate channels to $i$ without interfering with the existing winners}
\STATE Set $i$ wins and allocate channels to him;
%\ELSE
%\STATE Set $i$ loses and drop $i$;
\ENDIF
\ENDFOR
\end{algorithmic}
}%%%%%%%%%%%%
\end{algorithm}

\begin{lemma}
 The allocation mechanism PPS-EMUA presented in Algorithm~\ref{alg:8} is bid-monotone.
\end{lemma}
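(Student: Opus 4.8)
The plan is to treat PPS-EMUA (Algorithm~\ref{alg:8}) as a composition of two stages: the base allocation of Algorithm~\ref{alg:5} run in line~1, which is already bid-monotone by Lemma~\ref{lemma:muamonotone}, followed by the greedy ``supplementing'' loop over the losers in lines 2--7. I would fix a bidder $i$, located in some sub-grid $g_{r,s}^l$, who wins PPS-EMUA at bid $b_i$, raise the bid to $b'_i>b_i$, and split the argument according to \emph{how} $i$ won.

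\emph{Case 1: $i$ wins already in line~1.} Then $i\in APP(\mathcal{B})$ for Algorithm~\ref{alg:5}. By Lemma~\ref{lemma:muamonotone} that mechanism is bid-monotone, so $i$ remains in its output at $b'_i$; since PPS-EMUA keeps every line-1 winner, $i$ still wins. This case is immediate.

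\emph{Case 2: $i$ is admitted only in the supplementing loop.} Here $i$ loses Algorithm~\ref{alg:5} at $b_i$ but is later accepted because, at the moment it is processed, it conflicts with none of the already-committed winners. Raising the bid gives two sub-cases. If $i$ now wins Algorithm~\ref{alg:5} at $b'_i$ we are back in Case~1. Otherwise $i$ is still a line-1 loser, and I would run the standard decreasing-weight greedy-independent-set argument: a larger $b_i$ can only move $i$ \emph{earlier} in the non-increasing order of losers, so the prefix of losers processed before $i$ shrinks to a sub-prefix of the old one; the admission decisions on those earlier bidders are unaffected by $i$ (it was never among them), hence the set of committed winners that $i$ must avoid at $b'_i$ is contained in the set it faced at $b_i$. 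Consequently the feasibility test ``$i$ does not interfere with the existing winners'' that succeeded at $b_i$ still succeeds at $b'_i$, and $i$ is admitted again.

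The key proviso in Case~2 --- and the main obstacle --- is that this monotonicity reasoning presumes the \emph{line-1} allocation, and in particular the committed line-1 winners in $i$'s conflict neighborhood together with the remaining set of losers, does not change adversely while $i$ stays a line-1 loser. I would pin this down through the grid structure of Algorithm~\ref{alg:5}: because $i\in g_{r,s}^l$, raising $b_i$ touches only the greedy solution of that single sub-grid and the weakly increasing weights $w(APP(g_{r,s}^l))\le w(APP(g_r^l))\le w(APP(g_r))$, so the $\arg\max$ selections of $s'$ (in $g_r^l$) and $r'$ (overall) can only switch \emph{toward} $s$ and $r$, never away. Tracing these two selections, any change in the final solution must come from pulling in bidders of $g_{r,s}^l$ itself; such a change would either include $i$ (making $i$ a winner, contradicting the sub-case) or add same-sub-grid bidders that conflict with $i$, and the latter must be excluded by the very fact that it was $i$'s own weight increase that triggered the switch. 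Making this cascade through the two outer grid levels fully rigorous --- especially accounting for the non-monotone way the sub-grid greedy can recompose its chosen set when $i$'s per-unit rank improves --- is the delicate part of the proof; once the sub-grid behaviour is settled, the reduction to the greedy argument of Case~2 is routine.
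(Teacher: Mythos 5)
Your skeleton is the same as the paper's: line-1 winners are handled by Lemma~\ref{lemma:muamonotone}; extension winners are split according to whether the raised bid $b'_i$ turns $i$ into a line-1 winner; and the remaining sub-case is closed by the loser-list prefix argument (larger bid moves $i$ earlier, earlier losers' admission decisions do not depend on $i$, so the committed winners $i$ faces form a subset of those it faced before). You are also right that everything hinges on the proviso you flag --- that the line-1 allocation does not change while $i$ stays a line-1 loser. The paper disposes of this in a single unproved sentence (``the final allocation of Algorithm~\ref{alg:5} is the same as the allocation when $i$ bids $b_i$''), so you have correctly located the actual weak point of the argument.

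However, your attempted justification of that proviso is wrong, and this is a genuine gap. You claim $w(APP(g_{r,s}^l))$ is weakly increasing in $b_i$, so the $\arg\max$ selections of $s'$ and $r'$ ``can only switch toward $s$ and $r$.'' The opposite is true: while $i$ remains a line-1 loser, raising $b_i$ can strictly \emph{decrease} $w(APP(g_{r,s}^l))$ and change $APP(g_{r,s}^l)$. Take $N_1=N_2=1$, $N_3=4$, $N_i=3$, $m=4$, with $b_1=10$, $b_2=9$, $b_3=20$, $b_i=12$ (per-unit bids $10,9,5,4$): the critical bidder is $3$ and $b_3=20>19=b_1+b_2$, so $APP(g_{r,s}^l)=\{3\}$ with weight $20$. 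Raise $b'_i=18$ (per-unit $6$): the order becomes $1,2,i,3$, the critical bidder becomes $i$, and $b_1+b_2=19\ge 18$ forces $APP(g_{r,s}^l)=\{1,2\}$ with weight $19$, while $i$ still loses. So the weight drops, the sub-grid solution changes, and bidder $3$ (bid $20$) enters the loser list \emph{ahead} of $i$ --- contradicting both your monotone-$\arg\max$ cascade and the paper's invariance assertion. What actually rescues the lemma is conditioning on the hypothesis that $i$ won the extension at $b_i$: either (a) $APP(g_{r,s}^l)$ is not part of the final line-1 allocation, in which case its weight can only decrease, so it stays unselected and both the final allocation and the loser set are unchanged; or (b) it is part of the final allocation, in which case admitting $i$ in the extension requires $N_i\le m-\sum_{j\in APP(g_{r,s}^l)}N_j$, and this inequality (combined with the critical-bidder condition, and with $N_{\sigma(k)}>\sum_{q<k}N_{\sigma(q)}$ in the singleton case) implies $i$ can never become the critical bidder when raising his bid --- so the sub-grid allocation either stays identical or absorbs $i$ into its winning prefix, i.e., $i$ becomes a line-1 winner. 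Without this (or an equivalent) case analysis, your Case~2 is incomplete; with it, both your proof and the paper's would be sound.
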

\begin{proof}
Since we have proved that if the winner $i$ increases his bid in Algorithm \ref{alg:5}, he will
always win the auction. Here, we only need to concentrate on the winners that lose in Algorithm \ref{alg:5}, but will
win in the extended version. Suppose such a winner $i$ increases
his bid to $b'_i$ which satisfies $b'_i>b_i$, there are two possible cases: 1) $i$ wins in Algorithm \ref{alg:5} and 2) $i$ remains lose in Algorithm \ref{alg:5}.
In the case that $i$ loses in Algorithm \ref{alg:5}, the final allocation of Algorithm \ref{alg:5} is the
same as the allocation when $i$ bids $b_i$. Thus, there is no new bidder whose bidding price is higher than $i$ in the
sorted loser list of Algorithm \ref{alg:8} after the bidder $i$ increasing his bid.
In addition to $i$ wins by bidding $b_i$, we
can conclude that the bidder $i$ will also win the auction when he increases his bid.
\end{proof}

As this new allocation mechanism is bid-monotone, there exists a critical value for
each winner. We use $p'_i$ here to denote the minimum bid value of bidder $i$ with which $i$ will win
in Algorithm \ref{alg:5}, and $p''_i$ to denote the minimum bid value of winner $i$ with which $i$ will
 win in the sorted loser list. According to Algorithm \ref{alg:8}, $p'_i$ is the critical value of
 bidder $i$ in Algorithm \ref{alg:5}, and $p''_i$ should be smaller than $p'_i$.

For each winner $i$, his critical value can be computed as follows:
\begin{itemize}
  \item If $i$ wins in line $1$ of Algorithm \ref{alg:8} and will
  lose as long as he bids $b'_i<p'_i$, his critical value is equal to $p'_i$;
  \item Otherwise, his critical value is equal to $p''_i$. Suppose $f(i)$
  is the first bidder in the sorted loser list who loses the auction but
  will win as long as the bidder $i$'s bidding price is smaller than his,
  then $p''_i=b_{f(i)}$ if $f(i)$ exits and $p''_i=0$ otherwise.
\end{itemize}

As the extended allocation mechanism is bid-monotone and we always
charge each winner its critical value, we have
\begin{theorem}
PPS-EMUA is strategyproof and social efficient.
\end{theorem}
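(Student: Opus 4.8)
The plan is to establish the two properties separately, leaning on the standard characterization of strategyproof single-parameter mechanisms \cite{nisan2007algorithmic}: a mechanism is strategyproof if its allocation rule is bid-monotone and every winner is charged exactly its critical value. I would first invoke the preceding lemma, which already shows that the PPS-EMUA allocation rule of Algorithm~\ref{alg:8} is bid-monotone, so the only remaining obligation for truthfulness is to verify that the payment $p_i$ assigned to each winner $i$ equals $i$'s critical value, i.e., the infimum bid with which $i$ still wins.

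For the critical-value verification I would argue by the two cases already isolated above. If $i$ is a winner selected in line~1 of Algorithm~\ref{alg:8} (that is, a PPS-MUA winner), then by bid-monotonicity of Algorithm~\ref{alg:5} the threshold below which $i$ drops out of the PPS-MUA solution is precisely $p'_i$; since winning in line~1 guarantees winning overall, $p'_i$ is the critical value. If instead $i$ wins only through the supplementary loop, then the PPS-MUA outcome is unaffected by $i$'s bid in the relevant range, and $i$ enters the final allocation exactly when its bid exceeds that of the first conflicting loser $f(i)$ ahead of it in the sorted loser list; hence the critical value is $p''_i=b_{f(i)}$ (and $0$ if no such blocker exists). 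Combining the two cases, the charged payment equals the critical value, and strategyproofness follows from the characterization.

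For social efficiency I would observe that PPS-EMUA never discards a PPS-MUA winner: Algorithm~\ref{alg:8} runs Algorithm~\ref{alg:5} and then only adds further non-conflicting bidders, so its winner set is a superset of the PPS-MUA winner set and its total weight can only grow. Since PPS-MUA already achieves a $32$-approximation to $w(OPT(\mathcal{B}))$ (the earlier MUA approximation theorem), we obtain $w(APP(\mathcal{B}))_{\text{EMUA}}\ge w(APP(\mathcal{B}))_{\text{MUA}}\ge \frac{1}{32}\,w(OPT(\mathcal{B}))$, so PPS-EMUA inherits the $32$-approximation guarantee and is therefore social efficient in the same sense.

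I expect the main obstacle to be the critical-value verification in the supplementary case rather than the social-efficiency bound, which is essentially immediate from the superset observation. The delicate point is to confirm that, as $i$'s bid varies over the relevant range, the PPS-MUA stage genuinely remains fixed (so that the set of competing losers and their sorted order are stable) and that the single blocker $f(i)$ truly characterizes the threshold, rather than some cascade of reinsertions among the remaining losers. Making this monotonicity-of-the-loser-list argument precise is the step that needs care.
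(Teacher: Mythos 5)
Your proposal matches the paper's own argument: the paper likewise derives strategyproofness directly from the bid-monotonicity lemma for Algorithm~\ref{alg:8} together with the two-case critical-value payments ($p'_i$ for line-1 winners, $p''_i=b_{f(i)}$ or $0$ for supplementary winners), and treats social efficiency as inherited from PPS-MUA since the extension only adds non-conflicting winners. In fact you are somewhat more careful than the paper, which asserts the critical-value formulas and the efficiency claim without spelling out the superset argument or the stability-of-the-loser-list issue you flag.
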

\begin{comment}
To protect the bid value of bidders, we first run Algorithm \ref{alg:6} to allocate channels to bidders with
privacy preserving. Then, the agent randomly chooses two integers $\delta_1 \in \mathbb{Z}_{2^{1012}}$,
$\delta_2 \in \mathbb{Z}_{2^{1022}}$, computes and sends $\{\pi(i), N_i, L_i\}_{i\in \mathcal{B}}$ and
$(\pi(i), E(\delta_1 b_i+\delta_2))$ to the auctioneer
if bidder $i$ loses in Algorithm \ref{alg:6}. After receiving the information of bidders, the auctioneer can
decrypt the encrypted bids and run line $2\thicksim 5$ of Algorithm \ref{alg:8} to allocate the channels.
\end{comment}

In the following, we will show that PPS-EMUA can  be performed with privacy preserving.
Due to the page limit, we will only briefly introduce our ideas. Algorithm \ref{alg:9}
shows the allocation mechanism of PPS-EMUA with bid privacy.

%%\vspace{-0.4cm}
\begin{algorithm}
\caption{PPS-EMUA: Privacy-Preserving Allocation Mechanism}\label{alg:9}
{%\small
\begin{algorithmic}[1]
\STATE The auctioneer and the agent run Algorithm \ref{alg:6};
\STATE The agent randomly chooses two integers $\delta_1 \in \mathbb{Z}_{2^{\gamma_{1}}}$,
$\delta_2 \in \mathbb{Z}_{2^{\gamma_{2}}}$, computes and sends
$(\pi(i), E(\delta_1 b_i+\delta_2))$ if bidder $i$ loses in Algorithm \ref{alg:6},
and $\{\pi(i), N_i, L_i\}_{i\in \mathcal{B}}$ to the auctioneer;
\STATE The auctioneer decrypts the encrypted bids, and run lines
$2 \thicksim 5$ of Algorithm \ref{alg:8};
\end{algorithmic}
}%%%%%%%%%%%%
\end{algorithm}

The procedure of payment calculation has four steps: 1) We can obtain $p'_i$ for
each winner who wins in line $1$ of Algorithm \ref{alg:8}
and protect the true bid value of bidders by using the method we have introduced previously.
2) The auctioneer and agent can
check if bidder $i$ will lose as long as his bid is smaller than $p'_i$ by running Algorithm \ref{alg:9}
 and assuming $i$ loses in line $1$ of Algorithm \ref{alg:8}. 3) In the case that $i$ may win when he bids smaller than
$p'_i$, the auctioneer sets $p''_i=0$ if $f(i)$ does
not exist, and sets $p''_i=\delta_1 b_{f(i)}+\delta_2$ if $f(i)$ exists.
The auctioneer sends $\delta_1 p'_i+\delta_2$ in the case that $p'_i$
 is the critical value of bidder $i$, and $\delta_1 p''_i+\delta_2$ in other case.
 4) With the encrypted critical value, the agent can compute the payment of winner $i$.
 After obtaining all the payment of winners, the agent will send them to the auctioneer.

\begin{theorem}
The computation and communication cost are all $O(n^2)$ for PPS-MUA and PPS-EMUA.
\end{theorem}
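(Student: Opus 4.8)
The plan is to bound separately the three pieces that make up PPS-MUA and PPS-EMUA --- the allocation (Algorithm~\ref{alg:6}), the payment calculation (Algorithm~\ref{alg:7} together with the $p_i^2,p_i^3$ comparisons), and the extension step of PPS-EMUA (Algorithm~\ref{alg:9}) --- and to show that each contributes at most $O(n^2)$ to both the computation performed by the agent and the auctioneer and the number of ciphertexts exchanged between them. Since each bidder is assigned to exactly one sub-grid $g_{r,s}^l$, there are at most $n$ nonempty sub-grids, and hence $O(n)$ nonempty $1\times1$ and $2\times2$ grids, every ``per-grid'' loop in these algorithms ranges over only $O(n)$ grids.

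First I would bound the allocation. Within a single sub-grid the agent forms one masked ciphertext $E(\delta_{r,1}^l b_i+\delta_{r,2}^l)$ per resident bidder and the auctioneer sorts the residents by per-unit bid; because residents of distinct sub-grids are disjoint, the total number of such ciphertexts is $O(n)$ and the total sorting work is $O(n\log n)$. The homomorphic product used to form $E(\delta_{r,1}^l\sum_{i<k}b_{\sigma(i)}+\delta_{r,2}^l)$ multiplies at most the residents of one sub-grid, so summed over all sub-grids it is again $O(n)$. Selecting the best of the four sub-grids for each $g_r^l$ and the best of the four types costs $O(1)$ per grid over $O(n)$ grids. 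Thus allocation is $O(n)$ in communication and $O(n\log n)$ in computation, comfortably inside $O(n^2)$.

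The hard part is the payment calculation, and in particular Algorithm~\ref{alg:7}. Read naively, the loop ``repeat until $i$ loses'' re-runs the greedy check (lines $3\thicksim 5$ of Algorithm~\ref{alg:5}) after each increment of $j$, and a fresh sort per iteration would give $O(n^2\log n)$ for a single winner. I would instead argue that the per-unit sorted list of the sub-grid is computed once and that each increment of $j$ only advances a pointer and updates the running prefix sum $\sum_{q<k}b_{\sigma'(q)}$, so each iteration is $O(1)$ and the whole loop is $O(n_{r,s}^l)$, where $n_{r,s}^l$ is the number of residents of the sub-grid. Computing $p_i^2$ and $p_i^3$ needs only the already-available weights $w(APP(g_{r,s'}^l))$ and $w(APP(g_{r'}))$, \ie $O(1)$ work over the four sub-grids and four types. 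Hence a single winner costs $O(n)$, and since there are at most $n$ winners the payment phase is $O(n^2)$ in both measures.

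Finally I would account for the PPS-EMUA extension. Sorting the losers is $O(n\log n)$; for each loser the conflict test against the current winner set in line~$4$ of Algorithm~\ref{alg:8} costs $O(n)$, so the sweep over $O(n)$ losers is $O(n^2)$, while the agent sends one masked ciphertext $E(\delta_1 b_i+\delta_2)$ per loser, which is $O(n)$ communication. The critical value $p''_i=b_{f(i)}$ of an extension winner is found by re-running this interference check to locate the first displaced loser $f(i)$, again $O(n)$ per winner and $O(n^2)$ overall. Adding the three pieces, both the computation and the communication of PPS-MUA and PPS-EMUA are $O(n^2)$, which is the claim.
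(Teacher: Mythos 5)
First, a point of reference: the paper never actually proves this theorem --- it is stated bare, with no argument attached (the same is true of the companion $O(n^{k^2+1})$ claim for PPS-SUA). So your proposal is not competing with a proof in the paper; it is supplying the missing one, and your three-way decomposition (allocation, payment, extension) is the natural way to do it. Most of your bounds are correct, and your key refinement is genuinely necessary: since each bidder lies in exactly one sub-grid, all per-grid work involves $O(n)$ terms in total, and in Algorithm \ref{alg:7} the sorted per-unit list and prefix sums must be computed once and only updated as $j$ advances, so that a winner in a sub-grid with $n_{r,s}^l$ residents costs $O(n_{r,s}^l)$ and the payment phase costs $\sum_{r,s,l}(n_{r,s}^l)^2\le n^2$. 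A literal re-execution of lines $3\thicksim 5$ of Algorithm \ref{alg:5} at every increment of $j$ would already give $O(n^3)$, so this step of yours is doing real work.

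The gap is in your last step, the critical values of the PPS-EMUA extension winners. You assert that $f(i)$ can be located by ``re-running this interference check'' at cost $O(n)$ per winner. But deciding whether a given loser $j$ ``will win as long as bidder $i$'s bidding price is smaller than his'' requires testing $j$ against the whole current winner set (both for conflicts and for channel availability), which is already $O(n)$ for a single loser; scanning the sorted loser list for the first such $j$ is therefore $O(n^2)$ per winner and $O(n^3)$ over all winners --- exactly the blow-up you were careful to avoid in Algorithm \ref{alg:7}. (The paper's own payment description suffers the same problem, since it calls for re-running Algorithm \ref{alg:9} once per winner.) The fix is the same bookkeeping trick you used earlier: during the single $O(n^2)$ extension sweep, record for each loser $j$ which winners it conflicts with and how many channels they occupy in $j$'s neighborhood; then ``$j$ would win if $i$ were demoted below $j$'' reduces to checking that $j$'s only obstruction is $i$, an $O(1)$ lookup, and finding $f(i)$ becomes a single $O(n)$ scan per winner, restoring the $O(n^2)$ total. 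As written, however, this step is asserted rather than proved, and its literal reading is off by a factor of $n$.
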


\subsection{\textbf{Privacy Analysis}}

\begin{theorem}\label{bidprivacyTRAP-EMUA}
PPS-MUA and PPS-EMUA are   privacy-preserving for each bidder.
\end{theorem}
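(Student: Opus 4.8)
The plan is to mirror the two-party analysis used for Theorem~\ref{bidprivacyTheorem}, separately bounding what the \emph{agent} and the \emph{auctioneer} can infer, and then to argue that the extra steps of PPS-EMUA (Algorithm~\ref{alg:9}) introduce no new leakage. Throughout, I would lean on the standing assumption that the agent and auctioneer do not collude, since it is exactly this non-collusion that keeps the two pieces of secret information on opposite sides: the decryption key $DK$ is held by the auctioneer, while the random masks $\delta_{\ast}$ and the identity permutation $\pi$ are held by the agent.

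First I would analyze the agent's view. The agent only ever handles the ciphertexts $E(b_i)$ and performs homomorphic operations on them; from the auctioneer it receives purely ordinal data, namely the permuted winner indices $\{\sigma(i)\}$, the critical bidder $\sigma(k)$, and the maximizing type/grid indices $s'$ and $r'$. None of these reveals a plaintext bid, and by the IND-CPA security of Paillier's cryptosystem the agent cannot distinguish encryptions of distinct bids. Hence it learns nothing about any $b_i$ beyond the comparison outcomes it already needs in order to run the allocation.

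Second, and this is where the real work lies, I would analyze the auctioneer's view. The auctioneer holds $DK$ but never sees a raw $E(b_i)$; every ciphertext it decrypts has first been masked by the agent with freshly chosen random scalars. Concretely it recovers quantities of the form $\delta_{r,1}^l b_i + \delta_{r,2}^l$, masked partial sums $\delta_{r,1}^l \sum_{i<k} b_{\sigma(i)} + \delta_{r,2}^l$, masked solution weights $\delta_{r,3}^l w(APP(g_{r,s}^l)) + \delta_{r,4}^l$, and masked candidate payments $\delta_1 p_i^{\ast} + \delta_2$, each attached to a permuted identity $\pi(i)$. The affine map $x \mapsto \delta_1 x + \delta_2$ with $\delta_1>0$ preserves the ordering the auctioneer needs for sorting and for taking maxima, so correctness is untouched; but because $\delta_1,\delta_2$ are uniform and secret, each decrypted value is, from the auctioneer's standpoint, a single linear equation in at least two unknowns. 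I would make this precise by a counting argument: across all sub-grids and all payment rounds the auctioneer obtains a system of affine relations whose number of unknown mask coefficients and bid values strictly exceeds the number of equations, so the system is underdetermined and no individual $b_i$ is pinned down, while $\pi$ additionally prevents the auctioneer from associating any masked value with a known bidder.

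Third, I would extend the argument to PPS-EMUA, whose only additional disclosure is in Algorithm~\ref{alg:9}: for each loser the agent sends $E(\delta_1 b_i + \delta_2)$ under a permuted ID together with $\{N_i, L_i\}$, so upon decryption the auctioneer again sees only affinely masked bids with shuffled identities, which falls under the same counting argument. I expect the main obstacle to be making the ``more unknowns than equations'' claim fully rigorous: one must check that the masks are \emph{re-randomized} per sub-grid and per payment step so the relations cannot be cross-cancelled, and one must treat the per-unit sorting step with care, since there the auctioneer combines a masked bid with the cleartext $N_i$. Arguing that this combination still leaves the true $b_i$ hidden in the strong, information-theoretic sense---rather than merely computationally---is the most delicate point, and I would settle it by exhibiting, for any candidate value of $b_i$, a consistent choice of masks reproducing the auctioneer's exact view.
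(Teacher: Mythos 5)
Your proposal follows essentially the same route as the paper's own proof: a separate analysis of the agent's view (reduced to the IND-CPA security of the homomorphic encryption, since the agent sees only ciphertexts and ordinal information) and of the auctioneer's view (an underdetermined-system counting argument --- the paper counts $|g_{r,s}^l|+1$ masked equations against $|g_{r,s}^l|+2$ unknowns per sub-grid --- together with the observation that decrypted solution weights cannot be tied to individual bidders), with PPS-MUA and PPS-EMUA treated jointly since the former is a subprocedure of the latter. Your extra concerns about per-step mask re-randomization, the cleartext $N_i$ in the per-unit sorting, and upgrading the counting argument to an explicit consistency/simulation claim are refinements the paper glosses over, but they sharpen rather than change its approach.
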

\begin{proof}
Here we only prove it for PPS-EMUA as PPS-MUA is a procedure of PPS-EMUA.
We first consider the agent. Except the encrypted bids, the agent can only obtain some orders,
such as the bidding price of the bidders in each sub-grid,
\begin{comment}
the bid order
of the bidders in each sub-grid, the order of $\{w(APP(g_{r,s}^l))\}_{1\le s \le 4}$ in each $g_r^l$,
the order of $\{w(APP(g_r))\}_{1\le r \le 4}$ and the bid order of bidders who lose in line $1$ of Algorithm \ref{alg:8}
\end{comment}
during our auction mechanism of PPS-EMUA. In the process of payment calculation,
the agent can get nothing but the auction outcomes and some new orders.
Based on the IND-CPA security of homomorphic cryptosystem, the agent cannot learn more information about the bid of
any bidder.

Although the auctioneer holds the decryption key, he has no direct
access to the encrypted bids. While computing the allocation in each sub-grid $g_{r,s}^l$,
the auctioneer can built $|g_{r,s}^l|+1$
functions that with $|g_{r,s}^l|$ bids and two random numbers, where $|g_{r,s}^l|$ is the
number of bidders that located in $g_{r,s}^l$. Since the
number of variables is larger than the number of functions, the auctioneer cannot decrypt
any true bid value of bidders. In the other
parts of our auction mechanism, the auctioneer only receives the weight of solutions.
Since the auctioneer has no idea
about which bidders are in these solutions, he can also get nothing from them.
\end{proof}

%Since PPS-MUA is a subprocedure of PPS-EMUA, we can also get that:
%\begin{theorem}\label{bidprivacyTRAP-MUA}
%PPS-MUA can protect the bid privacy for each bidder.
%\end{theorem}

\section{Performance Evaluations}\label{sec:simulation}
%We conduct extensive simulations to evaluate
%and analyze the performance of our auction mechanisms,
%including the auction performances and the overhead of our mechanisms.

\subsection{Simulation Setup}

In our simulations, the number of bidders varies from 50 to 300, and
all the bidders are randomly distributed in a square area.
The bidding price of each bidder is uniformly generated in $[0,100]$.
We use a $1024$-bit length Paillier¡¯s homomorphic encryption
system in the simulation.
Thus, we choose $\gamma_1 = 1007$ and $\gamma_2 =1022$ to ensure the correctness of modular operations.
%For Single-Unit Auction (SUA), we study the impact of the number of bidders, and the grid size.
For Multi-Unit Auction (MUA), we  assume the channel demand of each bidder
is randomly generated from 1 to 4, and there are 4 or 8 available channels in spectrum market.
%We study the impact of the number of bidders, and number of available channels in spectrum market, while fixing the other one.

We mainly study the social efficiency ratio, computation overhead and the communication overhead in our simulations. We define the
social efficiency ratio is the ratio between the social efficiency of our approximation mechanism and the optimal one.
Since agent and auctioneer are two central party in this paper, we evaluate the computation overhead of them
in our design by recording the required processing time, and evaluate the communication overhead through calculating
the size of essential information transferred in the auction. All the simulations are performed over 100 runs and the result is the
averaged value.
\begin{comment}
We mainly study the following three key metrics in our simulations:
\begin{itemize}
  \item \emph{Social Efficiency Ratio:} This is the ratio between the social efficiency of our approximation mechanism and the optimal one.
  \item \emph{Computation Overhead:} In our experiments, we evaluate the computation overhead of the agent and the auctioneer in our design by recording the required processing time.
  \item \emph{Communication Overhead:} We evaluate the communication overhead through calculating the size of essential information transferred in the auction.
\end{itemize}
\end{comment}

%We evaluate the simulation results on a laptop with \emph{Intel(R) Core(TM) i5} 2.4GHz CPU and 4GB RAM under Windows 7 64-bit Professional Edition.

\begin{figure*}[!tbp]
    \centering
    \subfigure[The Performance of Social Efficiency Ratio under SUA model]{
        \label{fig:model1social}
        \includegraphics[width=55mm,height=40mm]{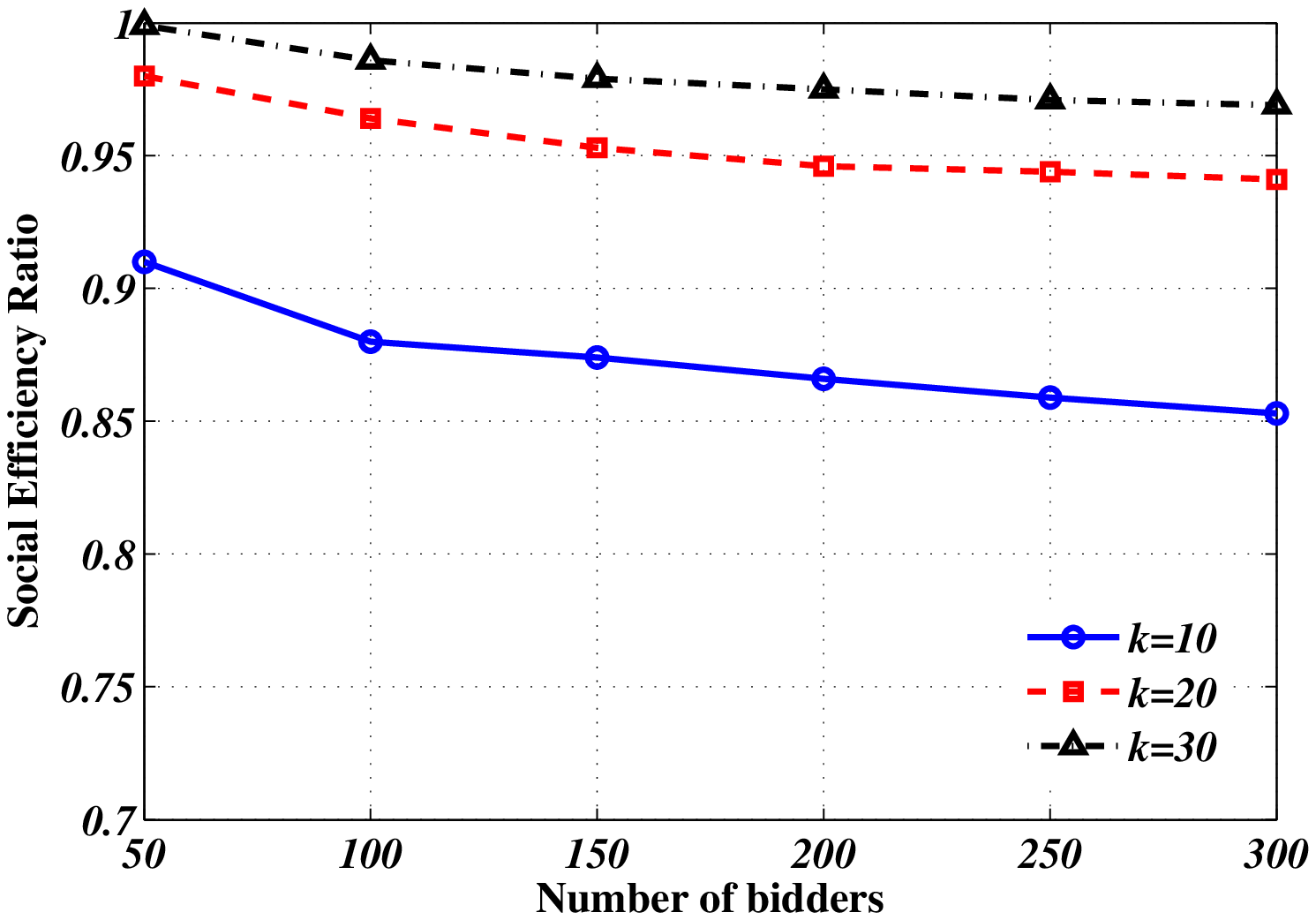}
    }
    \hspace{1mm}
    \subfigure[The Computation Overhead of the agent]{
        \label{fig:runtime_agent_model1}
        \includegraphics[width=55mm,height=40mm]{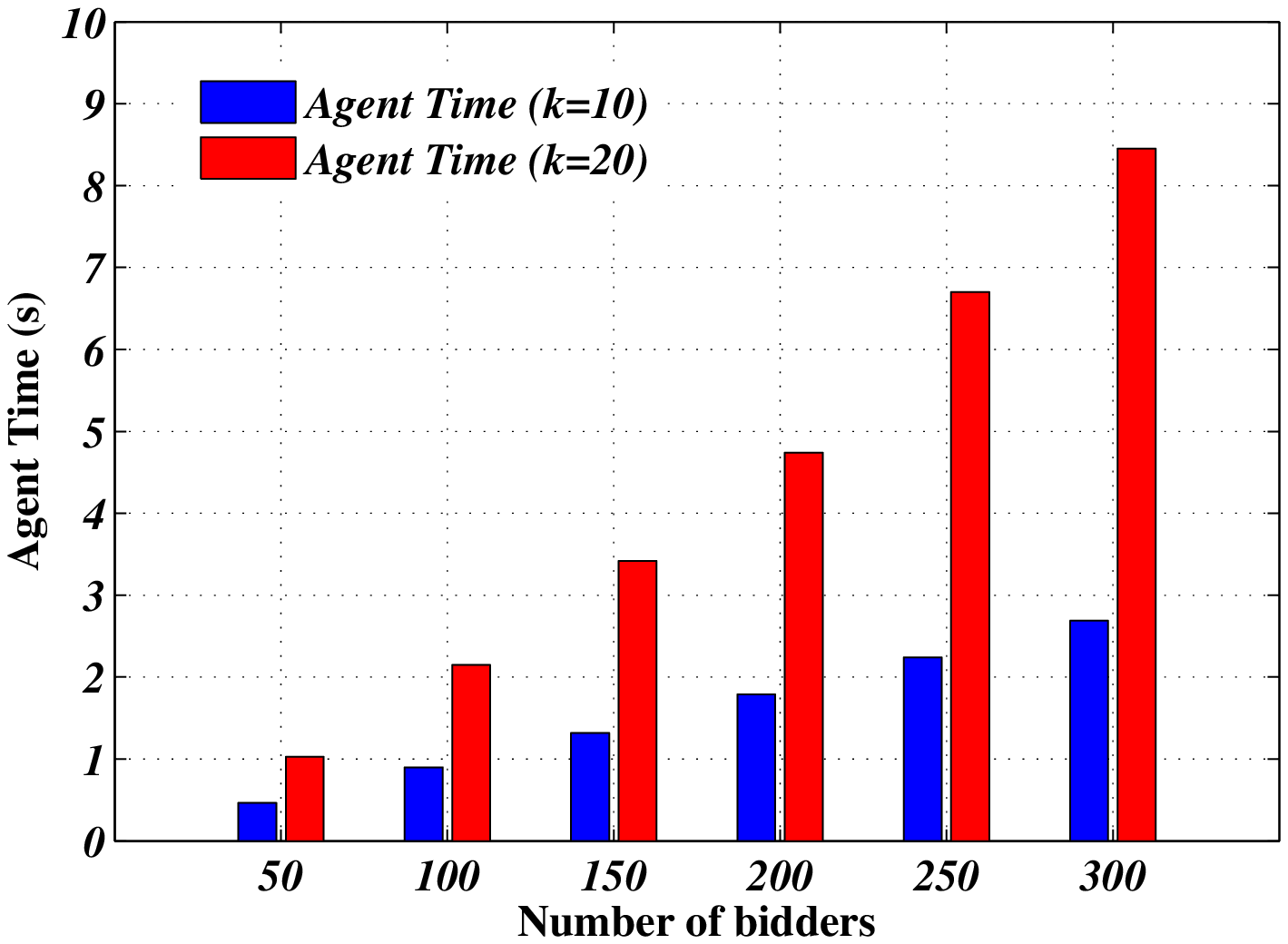}
    }
    \hspace{1mm}
    \subfigure[The Computation Overhead of the Auctioneer]{
        \label{fig:runtime_auctioneer_model1}
        \includegraphics[width=55mm,height=40mm]{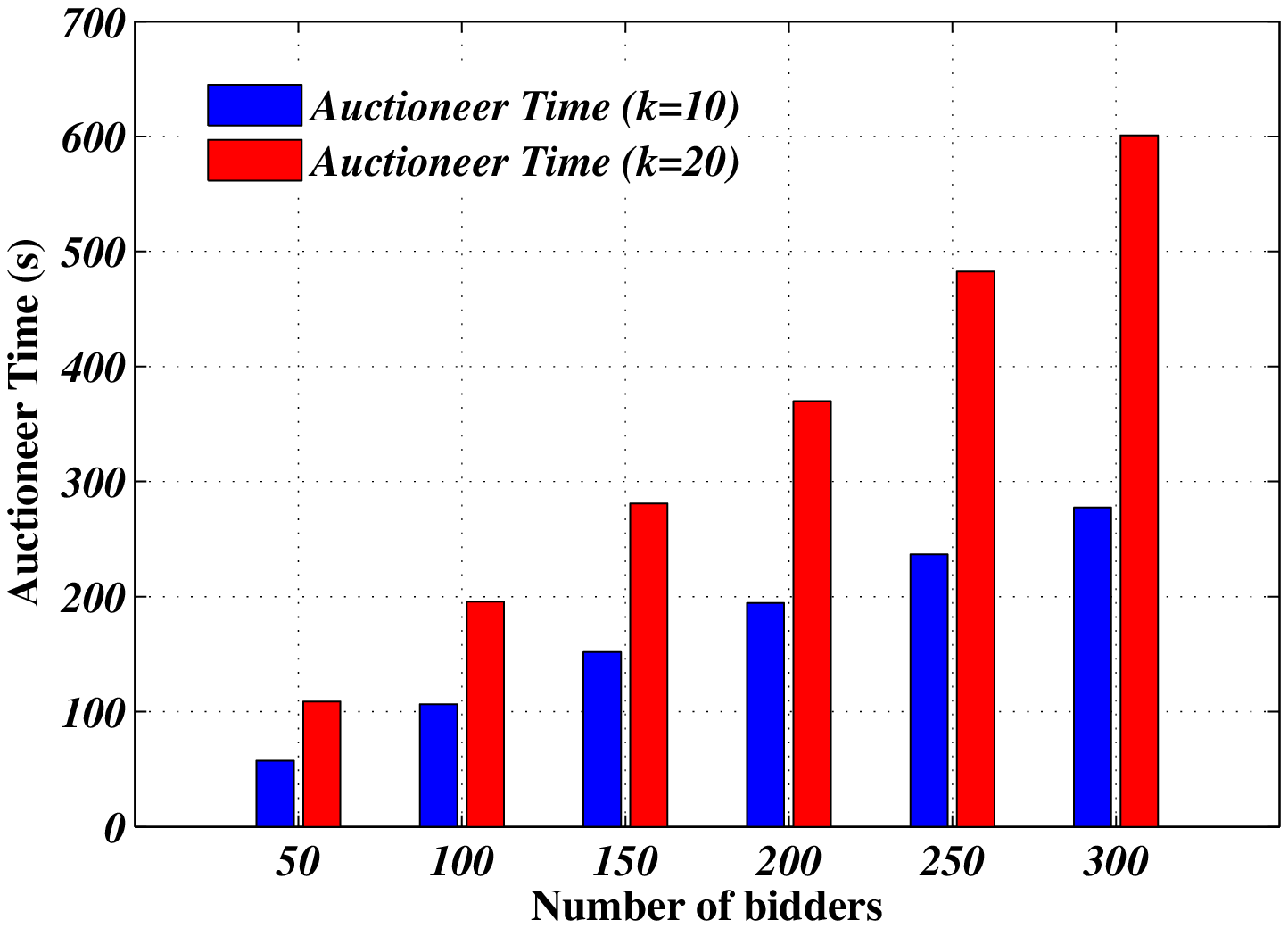}
    }
    \caption{The performance of PPS under SUA model. Here all the bidders are uniformly distributed in a $100 \times 100$ square area.}
    \label{fig:model1}
\end{figure*}

\begin{figure*}[!tbp]
    \centering
    \subfigure[The PPS-MUA and PPS-EMUA Performance of Social Efficiency Ratio under MUA model]{
        \label{fig:MUA_socialefficiency}
        \includegraphics[width=55mm,height=40mm]{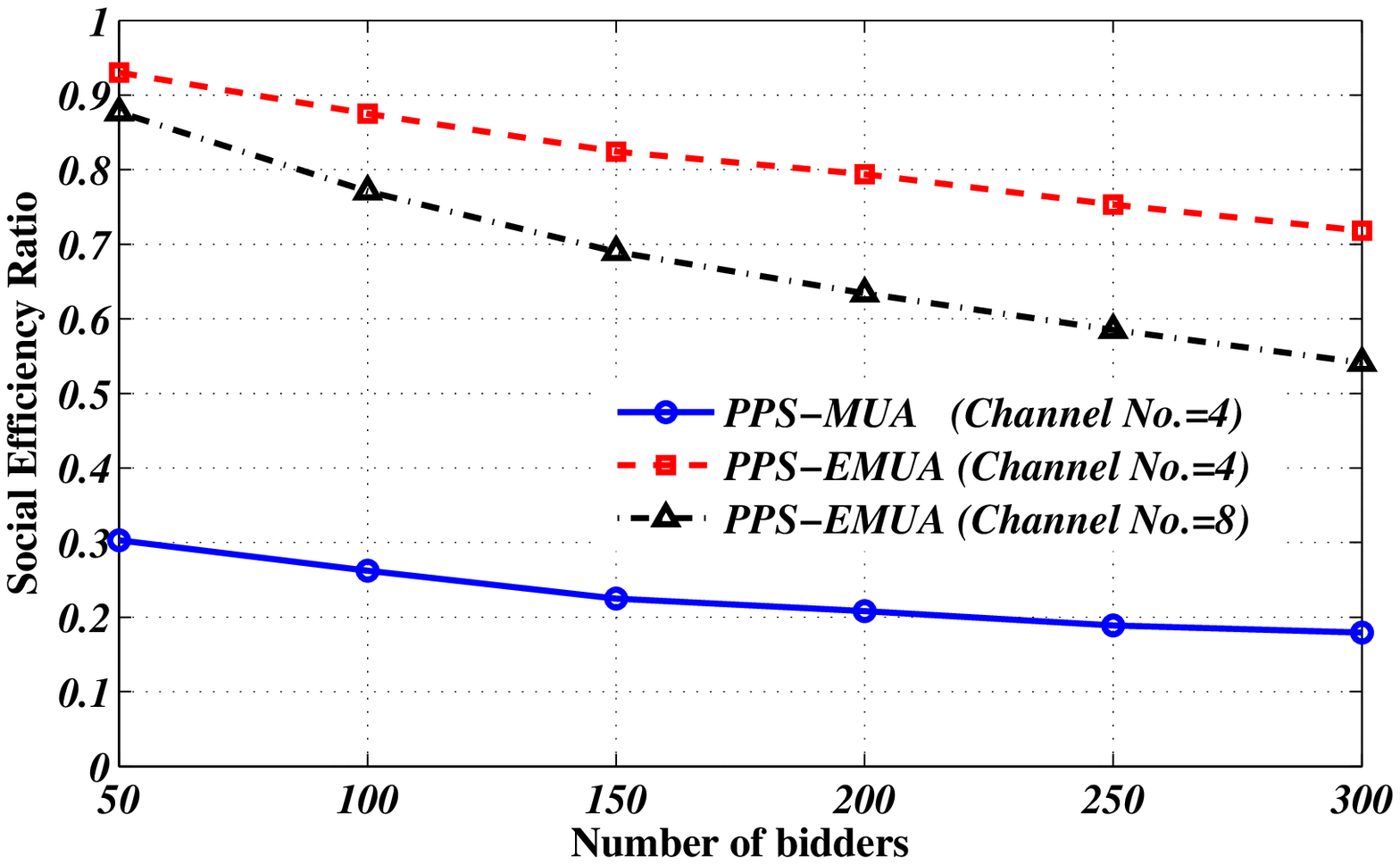}
    }
    \hspace{1mm}
    \subfigure[The Computation Overhead of the Agent]{
        \label{fig:model2_runtime_agent}
        \includegraphics[width=55mm,height=40mm]{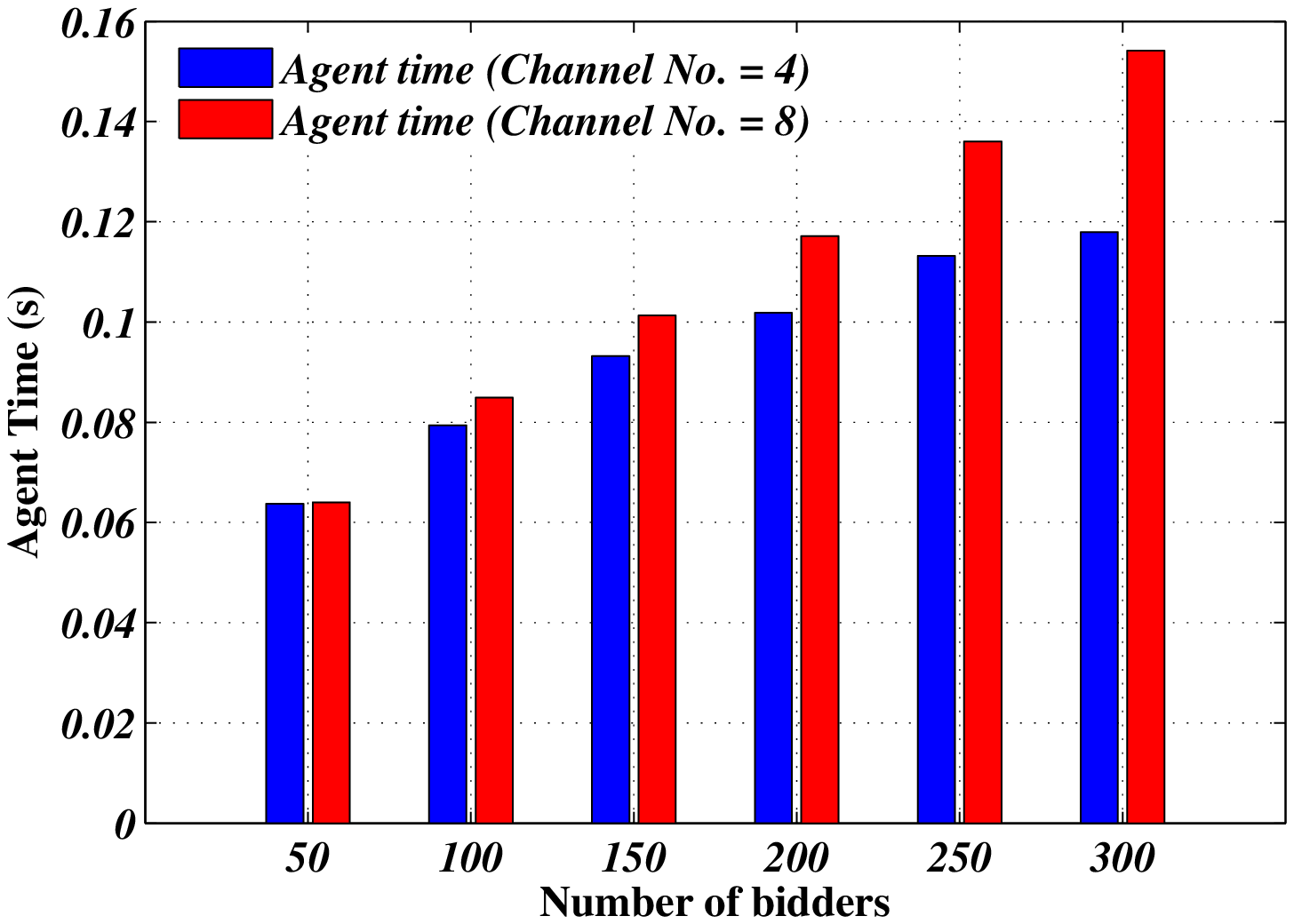}
    }
    \hspace{1mm}
    \subfigure[The Computation Overhead of the Auctioneer]{
        \label{fig:model2_auctioneer_time}
        \includegraphics[width=55mm,height=40mm]{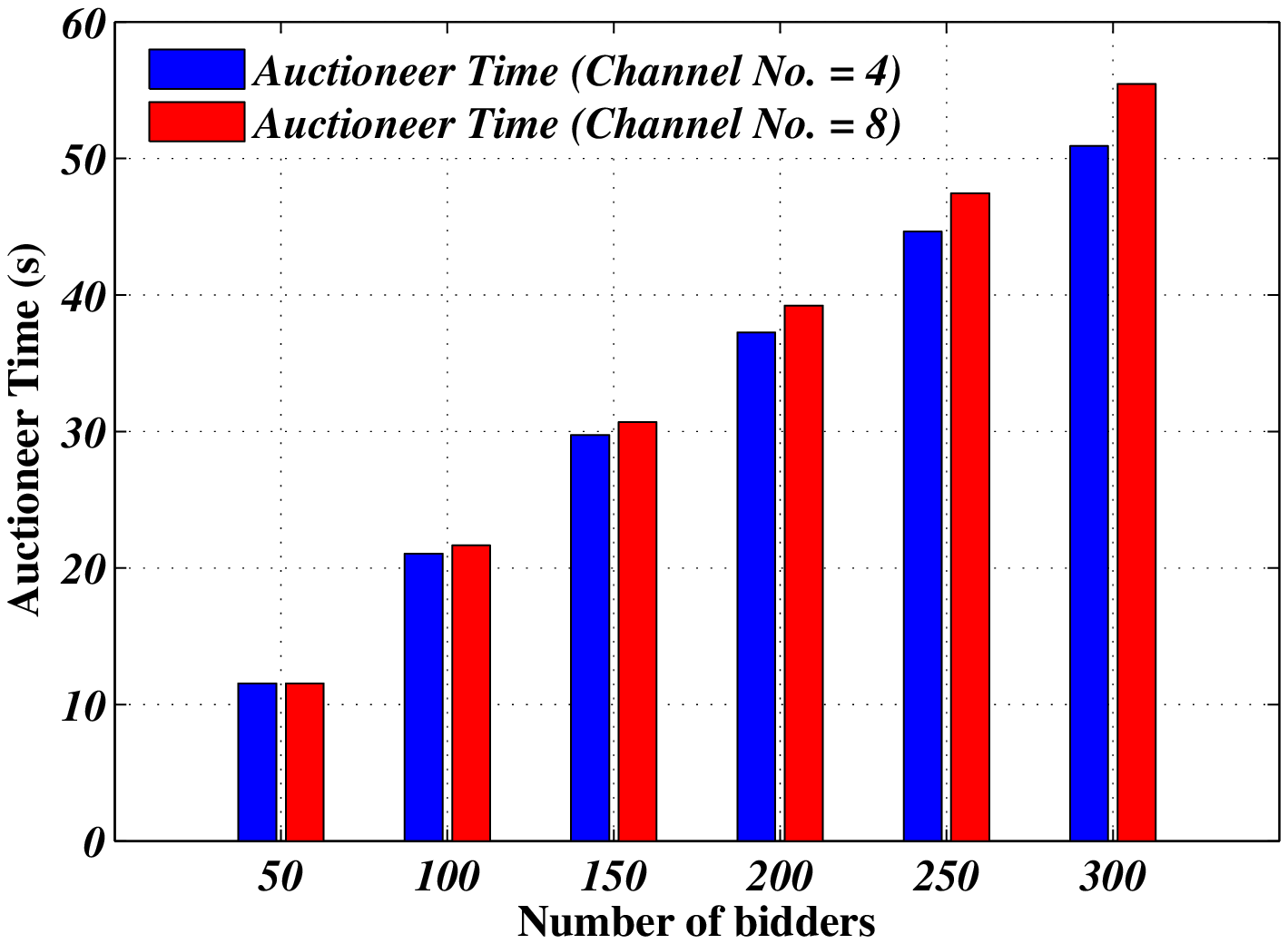}
    }
    \caption{The performance of PPS under MUA model. Here all the bidders are uniformly distributed in a $100 \times 100$ square area, and the channel demand of each bidder is randomly generated from 1 to 4.}
    \label{fig:model2}
\end{figure*}

\subsection{Performance of the PPS}

In this section, we mainly focus on the performance of
social efficiency ratio, auction computation overhead, and
communication overhead under different simulation settings.

We first study the \emph{social efficiency ratio} of our mechanisms under SUA model and
MUA model respectively.
From Fig. \ref{fig:model1social} and Fig. \ref{fig:MUA_socialefficiency}, obviously,
the social efficiency ratio decreases when the number of bidders increases.
This is because the increasing number of bidders will incur a more fierce degree of competition.
Therefore, the social efficiency ratio decreases slightly with the increasing number of bidders in both auction models.
Fig. \ref{fig:model1social} also shows that the social efficiency ratio increases when
$k$ increases, where $k$ is the size of a subdivided grid.
From the theoretical analysis, we can learn that when $k$
increases, less unit-disk defined by bidders' requests are thrown away by using the shifting method.
Thus, the social efficiency ratio increases with the increase of parameter $k$.
Of course, the performances of our proposed PPS-SUA is always better
than the theoretical bound in performance
analysis.
%We can learn that even with a small parameter $k$
%(For instance $k = 10$), our method always achieve more than $85\%$ of the
%optimum.
Specifically, Fig. \ref{fig:MUA_socialefficiency} examines the social efficiency
ratio achieved by PPS-MUA and extended version of PPS-MUA (\emph{a.k.a PPS-EMUA}).
We can observe that the ratio of PPS-EMUA performs much better than PPS-MUA when the available channels in spectrum market is fixed to 4.
%This is because the PPS-MUA mechanism only provides a performance guarantee, but the average performance may be relatively low.
We can also observe that the PPS-EMUA greatly improves the performance in Fig. \ref{fig:MUA_socialefficiency}.
This is because the PPS-EMUA adopts
a greedy-like allocation mechanism to allocate channels to the potential bidders who lose in PPS-MUA.

Then we study the computation overhead of the proposed mechanisms that were depicted in Fig. \ref{fig:runtime_agent_model1} and Fig. \ref{fig:model2_runtime_agent}.
It is obvious that the computation overhead of the agent change greatly as the number of bidders and $k$ in PPS-SUA.
%Since the computational complexity of the PPS-SUA is increased with the increase of $k$,
%therefore, the simulation results in
%Fig. \ref{fig:runtime_agent_model1} corroborate our theoretical analysis.
We can also find that the computation overhead of the agent is increased with the number of bidders,
and affected by the changing of the number of channels slightly in Fig. \ref{fig:model2_runtime_agent}.

Similar to the agent computation overhead, Fig. \ref{fig:runtime_auctioneer_model1} and Fig. \ref{fig:model2_auctioneer_time} plot computation overhead of the auctioneer.
We find that the cost time of auctioneer is much larger than that of the agent,
this is because that the decryption operation cost much more time than the homomorphic operations and auctioneer is responsible for all the decryption operations.

\begin{table}%[!hbp]
    \centering
    \caption{Communication Overhead under SUA model (KB)}
    \label{tab:commsua}
    \begin{tabular}{|c|c|c|c|c|c|c|}
    \hline
    \multirow{2}{*}{k} & \multicolumn{6}{|c|}{Number of bidders}\\
    \cline{2-7}
    & 50 & 100 & 150 & 200 & 250 & 300\\
    \hline
    k=10 & 124 & 233 & 333 & 428 & 521 & 611 \\
    \hline
    k=20 & 231 & 416 & 601 & 799 & 1026 & 1273 \\
    \hline
    k=30 & 327 & 603 & 926 & 1312 & 1779 & 2619 \\
    \hline
    \end{tabular}
\end{table}

\begin{table}%[!hbp]
    \centering
    \caption{Communication Overhead under MUA model (KB)}
    \label{tab:commmua}
    \begin{tabular}{|c|c|c|c|c|c|c|}
    \hline
    \multirow{2}{*}{Channel Number} & \multicolumn{6}{|c|}{Number of bidders}\\
    \cline{2-7}
    & 50 & 100 & 150 & 200 & 250 & 300\\
    \hline
    4 & 33.5 & 61.9 & 87.5 & 110.8 & 132.2 & 153.0 \\
    \hline
    8 & 34.2 & 63.7 & 90.7 & 117.2 & 140.6 & 164.1 \\
    \hline
    12 & 34.4 & 63.8 & 91.1 & 116.7 & 142.0 & 165.1 \\
    \hline
    \end{tabular}
\end{table}

Table \ref{tab:commsua} and Table \ref{tab:commmua} show the overall communication
overhead induced under SUA and MUA respectively.
We can easily get that the communication overhead is increased with the increment of
number of bidders and $k$ in Table \ref{tab:commsua}.
In Table \ref{tab:commmua}, the total number of channels also plays an important role
in the cost of communication overhead. Anyway, the overheads of the proposed
PPS mechanism are appropriate to be applied in real auction systems.

\section{Literature Reviews}\label{sec:review}
Auctions have been widely used in the scope of
dynamic spectrum allocation. 
%As we have mentioned above, strategyproofness
%is considered as one of the most key factors to in the auction design.
Large amount of studies are proposed aiming at designing economical robust
spectrum auction mechanisms (\eg \cite{al2011truthful,zhou2008ebay,zhou2009trust,wu2011small,xu2011efficient,combinatorialinfocom2013libaochun,
dong2012combinatorial,wang2010toda,xu2011truthful,huang2013mobihoc,zhu2012truthful,
deek2011preempt,wangdistrict,gopinathan2010strategyproof,wufan2013mobihoc}).
Each of these approaches has its own optimization goal.
For instance, ~\cite{gopinathan2010strategyproof,dong2012combinatorial,
huang2013mobihoc,zhu2012truthful,xu2011efficient} aim at maximizing the social efficiency
while ensuring strategyproofness in an auction design, and \cite{al2011truthful} aims at achieving the optimal revenue
for the primary user. In ~\cite{deek2011preempt,wang2010toda,xu2011truthful}, the authors consider the truthful online spectrum auction design.
Wu \etal \cite{wu2011small} and Xu \etal \cite{xu2011truthful,xu2011efficient} proposed spectrum auction mechanisms for multi-channel
wireless networks.
Zhou \etal \cite{zhou2009trust} and Wang \etal \cite{wangdistrict} solve the spectrum allocation in a double auction framework.
Unfortunately, none of the above studies addresses the privacy preserving issue in
the auction design.

Although many privacy preserving mechanisms have been proposed in
mechanism design ~\cite{sui2011efficiency,damgaard2007efficient,liqinghua2013percom},
these methods cannot be directly applied in spectrum auction design due to various reasons
 (such as spectrum spatial reuse, computationally complexity).
Recent years, many research efforts focus  on privacy preserving study in auction design ~\cite{naor1999privacy,cachin1999efficient}.
Huang \etal \cite{wufan2013spring} first propose a strategyproof spectrum auction with consideration of
privacy preserving, and Pan \etal \cite{pan2011purging} provide a secure spectrum auction to
prevent the frauds of the insincere auctioneer.
Unfortunately,  none of the existing solutions with privacy preserving provides any
performance guarantee, such as maximizing the social efficiency  which is often NP-hard.
Our mechanisms rely on privacy preserving comparison and polynomial evaluations \cite{taeho-info13}, which
 is extensively studied topic in   secure multi-party computation~\cite{yao1982protocols,damgard2008homomorphic,aggarwal2004secure,du2001secure}.

\section{Conclusion}\label{sec:conclusion}
In this paper, we  focused on  designing strategyproof auction mechanisms which
maximize the social efficiency without leaking any true bid value of
bidders, and proposed 
a framework of PPS for solving this issue. 
We   designed privacy-preserving strategyproof
auction mechanisms with approximation factors of $(1+\epsilon)$ and $32$ separately for SUA and MUA.
Our evaluation results
demonstrated that both PPS-SUA and PPS-EMUA achieve good performance on social efficiency, while
inducing only a small amount of computation and communication overhead.
A future work is to design robust privacy-preserving strategyproof
auction mechanisms without inexplicitly requiring the location of
bidders.
Another future work is to design privacy-preserving auction mechanisms
by removing the dependency of third-party agent.

\section*{Acknowledgement}
The research of authors is partially supported by the National Grand
Fundamental Research 973 Program of China (No.2011CB302905,
No.2011CB302705), National Natural Science Foundation of China (NSFC)
under Grant No. 61202028, No. 61170216, No. 61228202, and NSF
CNS-0832120, NSF CNS-1035894, NSF ECCS-1247944. Specialized Research
Fund for the Doctoral Program of Higher Education (SRFDP) under Grant
No. 20123201120010.

\ifCLASSOPTIONcaptionsoff
  \newpage
\fi

{
% \bibliographystyle
%\bibstyle{acm}
\bibliography{auction}
}

\end{document}